\documentclass{IEEEtran}

\pdfminorversion=4

\usepackage{stfloats}
\usepackage{amsfonts}
\usepackage{amssymb}
\usepackage{amsthm}
\usepackage{cite}
\usepackage{float}
\usepackage{color}
\usepackage{stfloats,fancyhdr}
\usepackage{amsmath,bm}
\usepackage{algorithm}
\usepackage{algorithmic}
\usepackage{multirow}
\usepackage{changepage}
\usepackage[normalem]{ulem}
\usepackage{amsthm}
%\numberwithin{figure}{section}
\usepackage{balance}

\newtheorem{theorem}{Theorem}
\newtheorem{lemma}{Lemma}

\newtheorem{definition}{Definition}
\newtheorem{assumption}{Assumption}
\newtheorem{remark}{Remark}

\ifCLASSINFOpdf
\usepackage[pdftex]{graphicx}
\DeclareGraphicsExtensions{.pdf,.jpeg,.png}
\else
\usepackage[dvips]{graphicx}
\DeclareGraphicsExtensions{.eps}
\fi

\usepackage{subfigure}
\usepackage{fancybox,dashbox}
\usepackage{authblk}
\usepackage{tabularx}

\usepackage{mathtools}

\mathtoolsset{showonlyrefs}
\usepackage{url}

\usepackage{breakurl}
\usepackage[breaklinks]{hyperref}

\newcommand{\myexpect}[1]{\mathsf{E}\left[#1\right]}

\newcommand{\myprob}[1]{\mathsf{Prob}\left[#1\right]}

\newcommand\aug{\fboxsep=-\fboxrule\!\!\!\fbox{\strut}\!\!\!}
\usepackage[margin=0.3in]{geometry}

\begin{document}
		
\title{\huge Anytime Control under Practical Communication Model}

\author{Wanchun Liu, Daniel E.\ Quevedo, Yonghui Li,  
%	Karl Henrik Johansson 
	and Branka Vucetic
	\vspace{-0.9cm}
}

\maketitle
\begin{abstract}
\let\thefootnote\relax\footnote{W. Liu, Y. Li and B. Vucetic are with School of Electrical and Information Engineering, The University of Sydney, Australia.
	Emails:	\{wanchun.liu,\ yonghui.li,\ branka.vucetic\}@sydney.edu.au. 
D. E. Quevedo is with the School of Electrical Engineering and Robotics, Queensland University of Technology (QUT), Brisbane, Australia.	Email: daniel.quevedo@qut.edu.au.
%K. H. Johansson is with School of Electrical Engineering and Computer Science, KTH Royal
%Institute of Technology, Stockholm, Sweden. Email: kallej@kth.se.	
}We investigate a novel anytime control algorithm for wireless networked control with random dropouts. The controller computes sequences of tentative future control commands using time-varying (Markovian)  computational resources. The sensor-controller and controller-actuator channel states are spatial- and time-correlated, and are modeled as a multi-state Markov process.
To compensate the effect of packet dropouts, a dual-buffer mechanism is proposed. 
We develop a novel cycle-cost-based approach to obtain the stability conditions on the nonlinear plant, controller, network and computational resources.
\end{abstract}

\begin{IEEEkeywords}
	Control over communications, nonlinear systems, stability of nonlinear systems, Markov fading channels.
\end{IEEEkeywords}

%\begin{IEEEkeywords}
%Estimation, Kalman filtering, linear systems, stability, mean-square error, Markov fading channel
%\end{IEEEkeywords}
%\vspace{-0.3cm}
\section{Introduction}
During   past decades, significant attention has focused on embedded or networked control systems that have limited and time-varying controller's computation capability due to high requirements on multitasking operations.
In particular, assuming constant and limited computational resources, bounds on computational time of specific optimization algorithms for achieving stability were derived in~\cite{McGovern,McGovern1}.
For time-varying computational resources, a dynamic computation task scheduling method was proposed for model predictive controllers~\cite{Henriksson}.
%, where  control inputs were not updated   during the computational delay of for optimization algorithms of MPC.
On-demand computation scheduling of control input based on plant states were investigated for periodic, event-triggered and self-triggered policies in~\cite{ondemand1,ondemand2,ondemand3}, respectively.

Another stream of research considers anytime algorithms for robust control and making efficient use of time-varying computational resources.
In general, an anytime algorithm can provide \emph{a} solution even with limited computational resources, and refines the solution when more resources are available.
In the pioneering work~\cite{bhattacharya2004anytime}, an anytime control system was proposed, where the number of updated states varies with the available computation time known to the controller \emph{a priori}.
In~\cite{GuptaAnytime}, an anytime control algorithm for a multi-input linear system was proposed for the scenario when the computation availability is \emph{unknown a priori}. The main idea was to first calculate the most important component of the control vector and then calculate the less important ones as more computational resource becomes available.
In~\cite{anytime5}, a  sequence-based anytime control method was proposed, which can calculate a tentative sequence of future control input for as many time steps as allowed by the available computational resources at each time step. The pre-calculated control sequence can compensate for the time steps when no computational resource is available for control calculations.
Following this work, sequence-based anytime control systems with Markovian processor availability, event-triggered sensor updates and multiple control laws were investigated in~\cite{anytime2,anytime3,anytime4}, respectively.
%In these anytime-control related works, either both sensor-controller
In \cite{bhattacharya2004anytime,GuptaAnytime,anytime5,anytime2}, the sensor-controller and controller-actuator channels were assumed to be perfect and error-free.
In~\cite{anytime3} and~\cite{anytime4}, the sensor-controller channel was assumed to have independent and identically distributed (i.i.d.) packet dropouts\footnote{Note that i.i.d. packet dropout channel is very commonly considered in the literature of networked control~\cite{ZHAN2015214}.}, i.e.,   only binary-level (on-off) channel states, were considered, while the controller-sensor channel was assumed to be perfect.

{\color{black}\emph{Motivation.}
The existing work of anytime control~\cite{bhattacharya2004anytime,GuptaAnytime,anytime5,anytime2,anytime3,anytime4} cannot effectively handle the fully distributed networked control scenario, where both the sensor-controller and controller-actuator communication links are wireless.
Moreover, the existing research only considered simple wireless channel models, which cannot capture the key features of practical wireless channels that are time-varying and correlated~\cite{Parastoo}.
Therefore, anytime control design and analysis for fully distributed networked control system in practical wireless channels are critical in practice, but also present new challenges.}

{\color{black}\emph{Novelty and contributions.}
In this work, we consider for the first time the sequence-based anytime control of a wireless networked control system (WNCS) in a generalized \textbf{dual imperfect channel}, where the sensor-controller and controller-actuator channel states are \textbf{spatial- and time-correlated} and are modeled as a \emph{multi-state  Markov process}.
Different from most of the existing works~\cite{anytime5,anytime2,anytime3,anytime4}, where only the controller has a buffer to keep the calculated sequence of control inputs, we propose to use anytime control with \textbf{buffers at both the controller and the actuator nodes}. The latter is used to compensate for dropouts in the   controller-actuator channel.
Moreover, the available computational resource of the controller is allowed to be time-correlated, modeled as a multi-state Markov process.
Such a \textbf{dual-channel-dual-buffer} anytime control system has practical advantages but brings significant challenges to its  analysis due to the complex system state updating rule, compared to previous setups.
We propose a novel cycle-cost-based approach to
derive sufficient conditions for stochastic  stability of the overall WNCS. Our stability conditions are stated in terms of plant dynamics, network dynamics,  buffer properties and computational resource dynamics.  We further show that, under suitable assumptions, the conditions derived guarantee robust stability when plant disturbances are taken into account.}

The remainder of the paper is organized as follows:  
Section~II presents the system model of anytime control in the dual-channel-dual-buffer WNCS.
Section~III develops the stability condition.
Section~IV provides robust stability analysis.
Section~V draws conclusions.

\emph{Notation:} Sets are denoted by calligraphic capital letters, e.g., $\mathcal{A}$.
$\mathcal{A} \backslash \mathcal{B}$ denotes set subtraction.
Matrices and vectors are denoted by capital and lowercase upright bold letters, e.g., $\mathbf{A}$ and $\mathbf{a}$, respectively.
%$\vert \mathcal{A}\vert$ denotes the cardinality of the set $\mathcal{A}$.
$\mathsf{E}\left[A\right]$ is the expectation of the random variable $A$.
%The conditional expectation $\mathsf{E}\left[A \vert B\right] = 0$ if 
The conditional probability $\myprob{{A}\vert {B}}=0$ if $\myprob{{B}}=0$.
$(\cdot)^{\top}$ is the matrix transpose operator.
{\color{black}$| \mathbf{v} |$ is the Euclidean norm of vector $\mathbf{v}$. }
%$\| \mathbf{v} \|_1$ is the sum of the vector $\mathbf{v}$'s elements. 
%$|\mathbf{v}| \triangleq \sqrt{\mathbf{v}^\top \mathbf{v}}$ is the Euclidean norm of a vector $\mathbf{v}$.
%$\text{Tr}(\cdot)$ is the trace operator. $\text{diag}\{v_1,v_2,...,v_K\}$ denotes the diagonal matrix with the diagonal elements $\{v_1,v_2,...,v_K\}$. 
$\mathbb{N}$ and $\mathbb{N}_0$ denote the sets of positive and non-negative integers, respectively.
$\mathbb{R}^m$ denotes the $m$-dimensional Euclidean space.
%$\rho(\mathbf{A})$ is the spectral radius of $\mathbf{A}$, i.e., the  largest absolute value of its eigenvalues.
%$\left[u\right]_{B \times B}$ denotes the $B \times B$ matrix with identical elements~$u$.
$[\mathbf{A}]_{j,k}$ and $[\mathbf{v}]_{j}$ denote the element at the $j$th row and $k$th column of a matrix $\mathbf{A}$, and the $j$th element of a vector $\mathbf{v}$, respectively.
$\lambda_{\max}(\mathbf{A})$ denotes the spectral radius of $\mathbf{A}$.
$\text{diag}\{\mathbf{v}\}$ denotes the diagonal matrix generated by the vector $\mathbf{v}$.
$\{v\}_{\mathbb{N}_0}$ denotes the semi-infinite sequence $\{v_0,v_1,\cdots\}$.
A function $\phi: \mathbb{R}_{\geq 0} \rightarrow \mathbb{R}_{\geq 0}$ is of class-$\mathcal{K}_{\infty}$ ($\phi \in \mathcal{K}_{\infty}$) if it is continuous, strictly increasing, and zero at zero.
$\mathbf{0}_{i}$ and $\mathbf{0}_{i \times j}$  denotes the all-zero $i \times i$ and $i \times j$ matrices, respectively. $\mathbf{A}=\mathbf{0}$ indicates that $\mathbf{A}$ has all zero elements.

\section{Anytime Control in a Dual-Channel-Dual-Buffer WNCS} \label{sec:sys}
We consider a WNCS consisting of a plant system, a remote controller and a wireless network placed between the plant and the controller.
A sensor sends its measurements of the plant to the controller, and the controller computes and sends control commands to a remote actuator via the wireless network as illustrated in Fig.~\ref{fig:system_model}. The dual wireless channel (i.e., the sensor-controller and controller-actuator channels) setup is different from \cite{anytime2,anytime3,anytime5}, which assumed either perfect transmissions in two channels or a single (imperfect) wireless channel from the sensor to the controller.
\begin{figure}[t]
	\centering\includegraphics[scale=0.8]{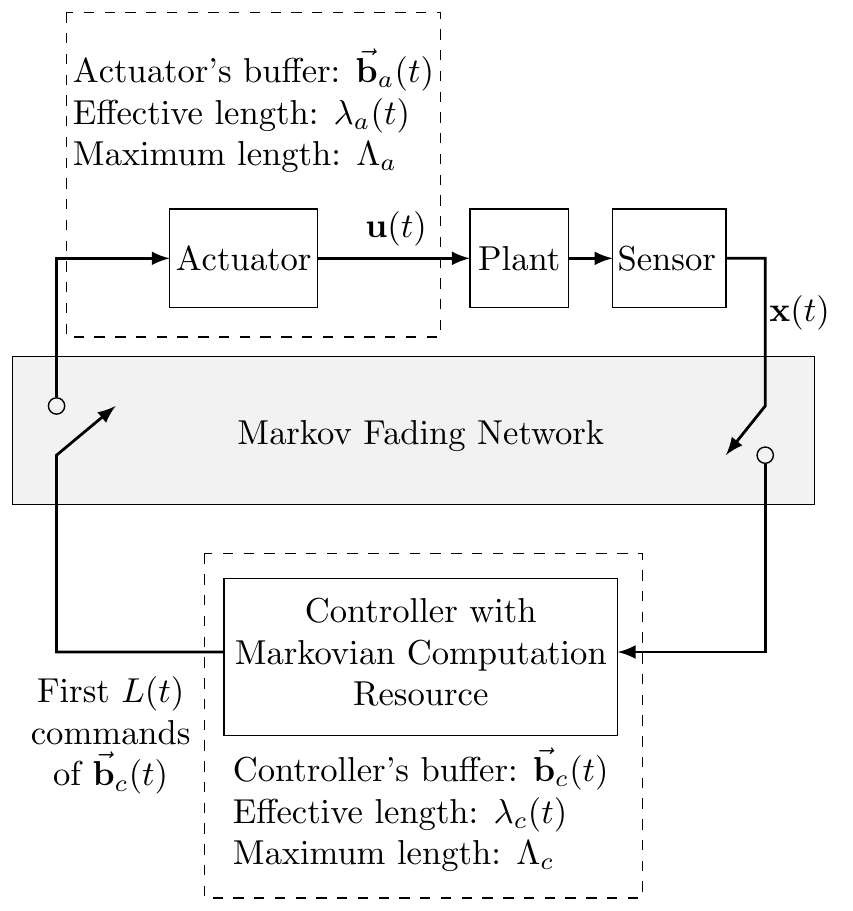}
	\vspace{-0.3cm}
	\caption{The dual-channel-dual-buffer WNCS.}
	\label{fig:system_model}
	\vspace{-0.5cm}	
\end{figure}
Each sampling period of the plant, denoted by $T_s$, is divided into four phases: sensor-controller (S-C) transmission, command computation, controller-actuator (C-A) transmission and implementation of control as illustrated in Fig.~\ref{fig:process}.
\begin{figure*}[t]
	\centering\includegraphics[scale=0.8]{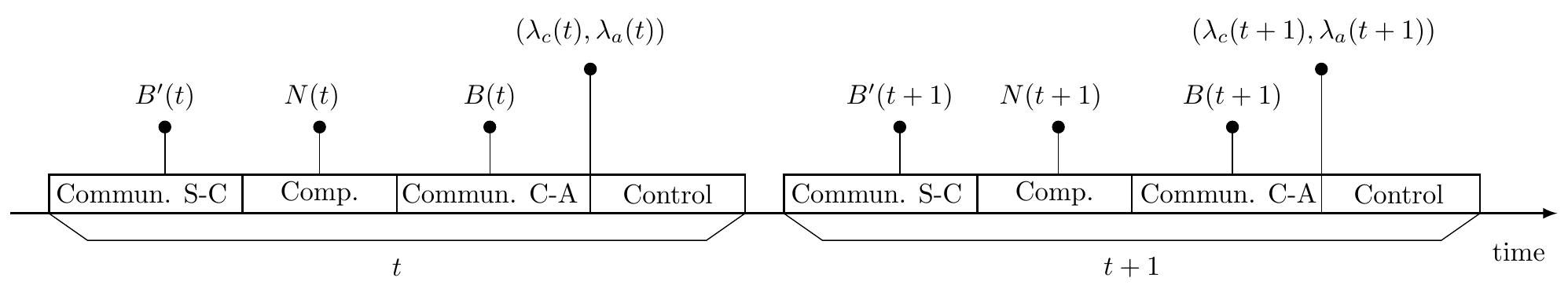}
	\vspace{-0.5cm}	
	\caption{A communications-computing-control process.}
	\label{fig:process}
	\vspace{-0.5cm}	
\end{figure*}

We consider a process-noise-free discrete-time non-linear plant model
\begin{equation}\label{eq:sys}
\mathbf{x}(t+1) = f(\mathbf{x}(t),\mathbf{u}(t)),
\end{equation}
where $\mathbf{x}(t)\in \mathbb{R}^{l_s}$ and $\mathbf{u}(t)\in \mathbb{R}^{l_u}$ are the plant state and the control input at time $t$.
Note that a more practical model with process noise will be investigated in Section~\ref{sec:robust}.

As a consequence of stochastic computational resources at the controller and   packet dropouts in the sensing and control channels (which will be described later in more detail), the plant may have to operate  in open loop for arbitrarily long time intervals. This may lead to performance degradation and potential loss of stability. Thus, throughout this work we will analyze the stability conditions of the considered system where the stability is defined as follows.

\begin{definition}\label{def:stability}
	\normalfont	
	The process-noise-free dynamical system \eqref{eq:sys} is stochastically stable, if for some $\psi \in \mathcal{K}_\infty$, the expected value $\sum_{k=0}^{\infty} \myexpect{\psi(|\mathbf{x}(k)|)}<\infty$.
\end{definition}

Our standing assumption is that the plant is globally controllable (in the idealized closed loop case):
\begin{assumption}[\!\!\cite{anytime2,anytime3,anytime5}] \label{assum:rho}
\normalfont
There exist functions $V: \mathbb{R}^{l_s} \rightarrow \mathbb{R}_{\geq 0}, \psi_1,\psi_2 \in \mathcal{K}_\infty$, a constant $\rho\in(0,1)$, and a control policy $\kappa: \mathbb{R}^{l_s} \rightarrow \mathbb{R}^{l_u}$, such that for all $x\in\mathbb{R}^{l_s}$
\begin{equation}\notag
\begin{aligned}
&\psi_1(|\mathbf{x}|) \leq V(\mathbf{x}) \leq \psi_2(|\mathbf{x}|)\\
&V(f(\mathbf{x},\kappa(\mathbf{x}))) \leq \rho V(\mathbf{x}).
\end{aligned}
\end{equation}
\end{assumption}

\begin{assumption}[\!\!\cite{anytime2,anytime3,anytime5}]\label{assum:alpha}
	\normalfont	
	There exists $\alpha>0$ such that $$V(f(\mathbf{x},\mathbf{0}))< \alpha V(\mathbf{x}), \forall \mathbf{x} \in \mathbb{R}^{l_s}.$$
	The initial plant state satisfies  $$\myexpect{\psi_2( \vert \mathbf{x}(0)\vert )}<\infty,$$ where $\psi_2 \in \mathcal{K}_\infty$ as in Assumption~\ref{assum:rho}.
\end{assumption}

\subsection{Dual Markov Fading Channels}
We consider wireless fading channels for the S-C and C-A transmissions~\cite{KangJIoT,Kang2019ICC,KangTWC}, where wireless channel status varies with time due to multi-path propagation and shadowing caused by obstacles affecting the radio-frequency (RF) wave propagation.
The time-varying channel conditions can be modeled as Markov processes~\cite{Parastoo}.
Furthermore, in practice, the two channel conditions can be correlated, named as spatial correlation, which is caused by the same environment obstacles~\cite{tse2005fundamentals}.

For wireless packet transmissions, there exists a fundamental tradeoff between reliability, i.e., the packet drop probability, and data rate, which determines the amount of information bits that a packet can carry~\cite{Polyanskiy}. For a fixed channel condition, increasing the data rate of a packet can lead to a higher packet drop probability.
For a good channel condition, one can reduce the packet drop probability while maintain the fixed data rate. Alternatively, one can increase the data rate while maintaining the packet drop probability~\cite{Goldsmith}.
In this work, we will keep the data rate fixed for the S-C transmissions, since there is nothing gained from aggregating past sensor measurements in the state feedback case. However, it is well known that sending control sequences can be beneficial to compensate for packet dropouts. To accommodate this in a fading channel environment, we allow C-A transmissions to contain packets of varying data rate. The rates  depend on the channel condition and provide a guaranteed packet-dropout probability. Thus, a longer control sequence can be transmitted to the actuator under a better channel condition with the same reliability. 

%For a typical predictive control system, the controller with some computational resources can generate and send a sequence of predictive control commands to the actuator,
%while the sensor without any computational resource can only take measurement and send it to the controller at each time.
%Due to the different data transmission requirements in the C-A and the S-C channels, we adopt different Markov channel models for them.

\textbf{C-A channel.} At time slot $t$, the controller can at most transmit $B(t) \in \mathcal{B}\triangleq\{0,1,\cdots, \bar{B}\}$ commands to the actuator with a guaranteed packet drop probability $\bar{\gamma}$.
In other words, $B(t)$ denotes the C-A channel quality.
In this sense, $B(t)$ can be treated as the capacity of the channel under the packet drop probability requirement $\bar{\gamma}$. 
Let $\gamma(t)=1$ and $\gamma(t)=0$ denote the successful and failed transmissions in time slot $t$.

\textbf{S-C channel.} At time slot $t$, 
the S-C channel power gain takes values from $\{h_1,h_2,\cdots,h_{\bar{B}'}\}$. 
Let $B'(t) \in \mathcal{B}' \triangleq \{1,2,\cdots,\bar{B}'\}$ denote the index of the channel power gain. Thus, $B'(t)$ denotes the S-C channel quality. Let $\gamma'(t)=1$ and $\gamma'(t)=0$ denote the successful and failed transmissions in time slot $t$.
The packet drop probability at time~$t$ is 
\begin{equation}
\bar{\gamma}'(t)=g(h_{B'(t)}) \in  \{g(h_1),g(h_2),\cdots,g(h_{\bar{B}'})\},
\end{equation}
where $g(\cdot)$ is the packet drop probability function in terms of the channel power gain. 

Then, we assume that the joint C-A and S-C channel condition $\{(B,B')\}_\mathbb{N}$ is a time-homogeneous Markov process, and the state transition probability is given as
\begin{equation}\label{eq:prob_B}
\begin{aligned}
p_{i,j} &\triangleq \myprob{(B(t+1),B'(t+1))=b_j \vert (B(t),B'(t))=b_i}, \\
& \hspace{5.8cm}\forall  b_i,b_j\in \mathcal{B} \times \mathcal{B}'.
\end{aligned}
\end{equation}

\begin{remark}
	Our current channel model jointly considers both the spatial-correlated S-C and C-A channels, the time-correlated fading channel conditions and variable data rate requirements. To the best of our knowledge, this has never been considered in the literature of WNCSs and is more general than  existing models.
	For example, independent dual channels with i.i.d.\ packet dropouts were considered in~\cite{schenato2007foundations}, which is a special case of our model when the fading channels degrade to static ones and the channels' spacial correlation is  perfectly canceled.
\end{remark}

\subsection{Anytime Control with Dual Buffers}
When considering perfect transmission between the controller and the actuator, as in \cite{anytime2,anytime3,anytime5}, the system only needs one buffer at the controller to store the computed control commands.
If imperfect transmissions are taken into account, it is convenient to include a command buffer at the actuator to provide robustness against packet dropouts, see e.g.~\cite{QUEVEDO20121803} for a general packetized predictive control method\footnote{Markovian communication and computational resources were not considered in~\cite{QUEVEDO20121803}.}. 
Clearly, the dual-buffer system introduces a more complex state updating process.

Let $\Lambda_c$ and $\Lambda_a$ denote the length of the controller's and the actuator's buffer, respectively.
Then, the buffer state at the controller after its transmission phase is denoted as
\begin{equation}
\vec{\mathbf{b}}_c(t) \triangleq [\mathbf{b}_{c,1}(t)^\top,\mathbf{b}_{c,2}(t)^\top,\cdots, \mathbf{b}_{c,\Lambda_c}(t)^\top]^\top,
\end{equation}
where $\mathbf{b}_{c,i}(t)\in\mathbb{R}^{l_u},\forall i\in \{1,2,\cdots,\Lambda_c\}$.
The buffer state at the actuator right after the C-A transmission but before the implementation of a control command, i.e., the pre-control buffer, is denoted as
\begin{equation}
\vec{\mathbf{b}}_a(t)\triangleq [\mathbf{b}_{a,1}(t)^\top,\mathbf{b}_{a,2}(t)^\top,\cdots, \mathbf{b}_{a,\Lambda_a}(t)^\top]^\top,
\end{equation}
where $\mathbf{b}_{a,i}(t)\in\mathbb{R}^{l_u},\forall i\in \{1,2,\cdots,\Lambda_a\}$.
In general, the buffers $\vec{\mathbf{b}}_a(t)$ and $\vec{\mathbf{b}}_c(t)$ keep the calculated sequences of control command, and the buffer updating rules will be given in the following part.

The control input is the first element in the actuator's buffer, i.e.,
\begin{equation}
\mathbf{u}(t) = \mathbf{b}_{a,1}(t),
\end{equation}
which can be treated as the previously predicted control command for the current time slot.
The buffer state at the actuator right after the control implementation, i.e., the post-control state, is
\begin{equation}
\vec{\mathbf{b}}'_a(t) \triangleq \mathbf{S}_a \vec{\mathbf{b}}_a(t),
\end{equation}
where the buffer shift matrices are defined as
\begin{equation}
\color{black}
\mathbf{S}_i \triangleq \begin{bmatrix}
\mathbf{0}_{l_u} & \mathbf{I}_{l_u} & \mathbf{0}_{l_u} & \cdots & \mathbf{0}_{l_u}\\
\vdots & \ddots & \ddots & \ddots & \vdots \\
\mathbf{0}_{l_u} & \cdots & \mathbf{0}_{l_u} & \mathbf{I}_{l_u} & \mathbf{0}_{l_u}\\
\mathbf{0}_{l_u} & \cdots & \cdots  & \mathbf{0}_{l_u} & \mathbf{I}_{l_u}\\
\mathbf{0}_{l_u} & \cdots & \cdots & \cdots & \mathbf{0}_{l_u}\\
\end{bmatrix} \in \mathbb{R}^{\Lambda_i l_u \times \Lambda_i l_u}, i=c \text{ or } a.
\end{equation}

Let $N(t)$ denote the number of calculated tentative future control commands at time $t$. The process $\{N\}_{\mathbb{N}}$ is a time-homogeneous Markov process with the transition probability
\begin{equation} \label{eq:prob_N}
q_{i,j} \triangleq \myprob{N(t+1)=j \vert N(t)=i}, i,j\in \mathcal{N},
\end{equation}
where $\mathcal{N}\triangleq \{0,1,\cdots,\bar{N}\}$. It is assumed that $\bar{N} \leq \Lambda_c$.

The \textbf{controller}'s operations are described as
\begin{enumerate}
	\item  If $\gamma'(t)N(t)>0$, the controller has a new update from the sensor and is available for computation.	
	In this case, it discards all the existing commands in its buffer and generates a sequence of $N(t)$ control commands to control the plant in time slots $t$ to $(t+N(t)-1)$. 
	The sequence of tentative controls is
	\begin{equation}
	\vec{\mathbf{u}}(t) = [\mathbf{u}_1(t)^\top,\mathbf{u}_2(t)^\top,\cdots,\mathbf{u}_{N(t)}(t)^\top]^\top.
	\end{equation}
	The buffer state before transmission is written as
	$$[\vec{\mathbf{u}}(t)^\top,\underbrace{\mathbf{0}_{l_u \times 1}^\top,\cdots,\mathbf{0}_{l_u \times 1}^\top}_{\Lambda_c-N(t)}]^\top. $$
	
	Specifically, the controller calculates the control sequence based on the anytime control algorithm proposed in~\cite{anytime5}, which is rewritten as
	\begin{equation}
	\begin{aligned}
	&\mathbf{u}_i(t) = \kappa(\mathbf{x}_i'(t)),  i=1,\cdots,N(t)\\
	&\mathbf{x}'_i(t) = \begin{cases}
	\mathbf{x}(t), &i=1\\
	f(\mathbf{x}'_{i-1}(t),\mathbf{u}_{i-1}(t)),  &i=2,\cdots,N(t)
	\end{cases}\\
	\end{aligned}
	\end{equation}
	where $\kappa(\cdot)$ is defined in Assumption~\ref{assum:rho}.

	Considering the C-A channel capacity and the actuator's buffer length, the controller transmits $\min\{B(t),N(t),\Lambda_a\}$ commands to the actuator. If the transmission is successful, the buffer shifts by $\min\{B(t),N(t),\Lambda_a\}$ steps. Otherwise, the controller erases its buffer. This is because the first computed control command $\mathbf{u}_1(t)$ cannot be implemented in the current time slot $t$, and the rest of computed control commands, which are calculated based on the successful implementation of the first control command, become useless. It is clear that at time instances where   the actuator has run out of buffer contents,  we have   $\vec{\mathbf{b}}_a(t) = \mathbf{0}$ and $\mathbf{u}(t) = \mathbf{0} \neq \mathbf{u}_1(t)$. For the case that $\vec{\mathbf{b}}_a(t) \neq \mathbf{0}$, the predicted control input $\mathbf{u}(t)$ is equal to currently calculated control command $\mathbf{u}_1(t)$ only in the perfect process-noise-free scenario, and $\mathbf{u}(t)\neq\mathbf{u}_1(t)$ in general.	

\item If $\gamma'(t)=0$, the controller does not have a new update from the sensor. In this case, it does not generate any new control command. 
	If $N(t)=0$, the controller does not have the computational resource to generate any new control command.\\
	In these two cases, if the plant is out of control in the previous time slot, i.e., $\vec{\mathbf{b}}_a(t-1) = \mathbf{0}$, the controller erases its buffer due to the same reason in case 1); otherwise, the controller sends the buffered commands to the actuator  as much as it can, subject to the constraints of the C-A channel capacity and the actuator's buffer length.
\end{enumerate}

The \textbf{actuator}'s operations are described as
\begin{enumerate}
%	\item After the implementation of control in each time slot, the actuator's buffer is shifted by one step, i.e., $\mathbf{S}_a \vec{\mathbf{b}}_a(t-1)$, since the first command in the buffer is used for control. This applies to all the following scenarios.
	\item  If $\gamma(t)=0$, the actuator's buffer is shifted by one step, i.e., $\vec{\mathbf{b}}_a(t) = \mathbf{S}_a \vec{\mathbf{b}}_a(t-1)$, since the first command in the buffer of the previous time slot was used for control.
	\item  If $\gamma(t)=1$ and $\gamma'(t)N(t)>0$, the actuator erases the previous commands and stores the received ones.
{\color{black}This is because the newly calculated control commands are expected to perform better than the previous calculated commands, especially when the process noise has a large variance.}
\footnote{\color{black}	Note that when the number of tentative future commands in the newly received packet is less than the previous ones, we still need to erase all the previous commands, not just part of them. The reason is that each buffered control command at the actuator was calculated assuming the successful implementation of the previously calculated control command. Thus, if the first few control commands at the actuator's buffer are removed (i.e., cannot be applied for control), then the rest of the buffer is of little use. }

	\item  If $\vec{\mathbf{b}}_a(t-1) = \mathbf{0}$ and $\gamma'(t)N(t)=0$, no operation on the actuator's buffer is required as there is no new commands transmitted.
	\item  If $\vec{\mathbf{b}}_a(t-1) \neq \mathbf{0}$, $\gamma'(t)=1$ and $\gamma'(t)N(t)=0$, the actuator shifts its buffer by one step due to the same reason in 1) and stores the received commands in the buffer right after the existing commands.
\end{enumerate}

Let $\lambda_c(t)$ and $\lambda_a(t)$ denote the effective buffer lengths at the controller and the actuator, respectively. 
Intuitively, $\lambda_c(t)$ and $\lambda_a(t)$ jointly determine the closed-loop performance of plant, as a larger $\lambda_c(t)$ and a larger $\lambda_a(t)$ indicate that the plant will be delivered tentative control values  for a longer time.
Based on the controller's and the actuator's operations, the actuator is not necessary to have a larger buffer than the controller. Thus, we assume that $\Lambda_a \leq \Lambda_c$.

Let $L(t)$ denote the number of tentative commands to be transmitted. Based on the controller's operations, $L(t)$ can be written as
\begin{equation} \label{eq:V}
L(t) \!=\! \begin{cases}
\!\min\{B(t),N(t),\Lambda_a\}, &\text{\!\!\!\!\!\!if } \gamma'(t)N(t)>0\\
\!\min\{B(t),\lambda_c(t-1),\Lambda_a-\lambda_a(t-1)\}, & \begin{aligned}
&\text{\!\!\!\!\!if } \gamma'(t)N(t)=0, \\
&\!\!\!\!\!\lambda_a(t-1)\neq 0
\end{aligned} \\
\!0, &\text{\!\!\!\!\!\!otherwise.}\\
\end{cases}
\end{equation}

Then, the buffer-updating rules based on the controller's and the actuator's operations are
\begin{equation} \label{eq:b_c}
\vec{\mathbf{b}}_c(t) \!=\! \begin{cases}
\mathbf{S}_c^{L(t)}\begin{bmatrix}
 \vec{\mathbf{u}}(t)\\
\mathbf{0}
\end{bmatrix}, &\text{\!\!\!\!\!if } \gamma'(t)N(t)>0, \gamma(t)=1\\
\mathbf{0}, &\text{\!\!\!\!\!if } \gamma'(t)N(t)>0, \gamma(t)=0\\
\mathbf{0}, &\text{\!\!\!\!\!if } \gamma'(t)N(t)=0,\lambda_a(t-1)=0\\
\mathbf{S}_c^{L(t)} \vec{\mathbf{b}}_c(t-1), &\text{\!\!\!\!\!if } \gamma'(t)N(t)=0,\lambda_a(t-1)\!\neq\! 0, \gamma(t)\!=\!1\\
\vec{\mathbf{b}}_c(t-1),&\text{\!\!\!\!\!if } \gamma'(t)N(t)=0,\lambda_a(t-1)\!\neq\! 0, \gamma(t)\!=\!0\\
\end{cases}
\end{equation}
and
\begin{equation} \label{eq:b_a}
\vec{\mathbf{b}}_a(\!t) \!\!=\!\! \begin{cases}
\mathbf{S}_a\vec{\mathbf{b}}_a(t-1), \text{\hspace{2.2cm} if }\gamma(t)=0\\
\!\!\!\begin{bmatrix}
[\mathbf{u}_1(t)^\top,\cdots,\mathbf{u}_{L(t)}(t)^\top]^\top\\
\mathbf{0}
\end{bmatrix}, \text{ if }\gamma'(t)N(t)>0,\gamma(t)=1\\
\mathbf{0}, \begin{aligned}
\text{\hspace{3.6cm} if }\gamma'(t)N(t)=0, 
\vec{\mathbf{b}}_a(t) = \mathbf{0}
\end{aligned}\\
\!\!\!\begin{bmatrix}
\![\mathbf{b}_{a,2}(t-1)^\top,\cdots,\mathbf{b}_{a,\lambda_a(t-1)}(t)^\top]^\top\\
[\mathbf{b}_{c,1}(t)^\top,\cdots,\mathbf{b}_{c,L(t)}(t)^\top]^\top\\
\mathbf{0}
\end{bmatrix}\!\!\!,\! \begin{aligned}
\text{if }\gamma'(t)N(t)\!=\!0, \\
\vec{\mathbf{b}}_a(t) \!\neq\! \mathbf{0}, \gamma(t)\!=\!1.
\end{aligned}\\
\end{cases}
\end{equation}

From \eqref{eq:b_c} and \eqref{eq:b_a}, the updating rules of the effective buffer lengths $\lambda_c(t)$ and $\lambda_a(t)$ are 
\begin{equation}\label{eq:prob_c}
\lambda_c(t) = \begin{cases}
N(t) - L(t) ,& \text{if } \gamma'(t)N(t)>0,\gamma(t) =1\\
0 ,& \text{if } \gamma'(t)N(t)>0,\gamma(t) =0\\
0 ,& \text{if } \gamma'(t)N(t)=0,\lambda_a(t-1) =0\\
\lambda_c(t+1) - L(t) ,& \text{\!\!\!\!if } \gamma'(t)N(t)=0,\lambda_a(t-1) \! \neq \! 0,\! \gamma(t)\!=\!1\\
\lambda_c(t-1),& \text{\!\!\!\!if } \gamma'(t)N(t)=0,\lambda_a(t-1) \!\neq\! 0,\! \gamma(t)\!=\!0
\end{cases}
\end{equation}
and
\begin{equation}\label{eq:prob_a}
\lambda_a(t) \!=\! \begin{cases}
\!\max\{\lambda_a(t\!-\!1)\!-\!1,\!0\} ,& \text{\!if } \gamma(t) =0\\
\!L(t) ,& \text{\!if } \gamma'(t)N(t)>0,\gamma(t) =1\\
\!0 ,& \text{\!if } \gamma'(t)N(t)\!=\!0,\lambda_a(t-1) \!=\!0\\
\!\lambda_a(t-1) + L(t)-1 ,& \begin{aligned}
&\text{\!if } \gamma'(t)N(t)\!=\!0,\lambda_a(t-1) \!\neq\! 0, \\
&\gamma(t)=1.
\end{aligned}
%\lambda_a(t-1)-1,& \text{if } \gamma'(t)N(t)=0,\lambda_a(t-1) \neq 0, \gamma(t)=0
\end{cases}
\end{equation}

\section{Stability of the Anytime Control System}\label{sec:stability}
Based on the anytime control method described in \eqref{eq:V}, \eqref{eq:b_c} and \eqref{eq:b_a}, and following the established stability analysis framework adopting stochastic Lyapunov functions~\cite{anytime2,anytime3,anytime5}, to investigate the stability condition of the system~\eqref{eq:sys}, we only need to focus on the plant events that the actuator runs out of control commands, i.e., $\lambda_a(t) =0, \forall t\in\mathbb{N}_0$.
However, since the process $\{\lambda_a\}_{\mathbb{N}_0}$ has an infinite memory and is not a Markov process, the methods in~\cite{anytime2,anytime3,anytime5} are not directly applicable. Instead,  we shall analyze the control system through the aggregated Markov process $\{Z\}_{\mathbb{N}_0}$ defined as
\begin{equation}\label{eq:Z}
\begin{aligned}
Z(t) &\triangleq \left(\lambda_c(t),\lambda_a(t),B(t+1),B'(t+1),N(t+1)\right)\\
& \hspace{2.6cm } \in \mathcal{X}_c \times \mathcal{X}_a \times \mathcal{B} \times \mathcal{B}' \times\mathcal{N}, t\in \mathbb{N}_0,
\end{aligned}
\end{equation}
where $\mathcal{X}_c \triangleq \{0,1,\cdots,\Lambda_c\}$ and $\mathcal{X}_a \triangleq \{0,1,\cdots,\min\{\Lambda_a,\bar{N}\}\}$.
Assume that $Z(t), \forall t \in\mathbb{N}_0$, belongs to the finite set $\mathcal{S} \triangleq \{s_0,s_1,\cdots,s_{S}\}$ with cardinality $S$.	
Different from~\cite{anytime2}, which only needs to analyze an aggregated process of two processes, we need to investigate the aggregation of five processes, where both $\{\lambda_c\}_{\mathbb{N}_0}$ and $\{\lambda_a\}_{\mathbb{N}_0}$ are correlated with $\{B,B',N\}_{\mathbb{N}_0}$.

Since the control process is divided by the open-loop events with $\lambda_a(t)=0$,
we define $\mathcal{K} = \{k_n\}_{n\in \mathbb{N}_0}$ as the sequence of time steps with $\lambda_a(t)=0$. 
We name the time sequence between $k_n$ and $k_{n+1}$ as the $(n+1)$th cycle of the process, $\forall n \in \mathbb{N}_0$.
Then, the number of time steps between consecutive elements of $\mathcal{K}$ is 
\begin{equation} \label{eq:delta}
\Delta_{n+1} = k_{n+1}-k_n.
\end{equation}

Without loss of generality, let the set $\mathcal{S}_0 \triangleq \{s_0,s_1,\cdots,s_{S_0}\} \in \mathcal{S}$ with cardinality of $S_0$ denote the subset of $\mathcal{S}$ consisting of all the states with $\lambda_a(t)=0$, and hence $Z(k_n) \in \mathcal{S}_0,\forall n\in\mathbb{N}_0$.
In~\cite{anytime2,anytime3,anytime5}, the set $\mathcal{S}_0$ has only one state. In our scenario, $S_0>1$ introduces more challenges in analyzing the process $\{k_n\}_{n\in \mathbb{N}_0}$. 
The state transition process of $\{Z\}_{\mathbb{N}_0}$ is illustrated in Fig.~\ref{fig:set}.
\begin{figure}[t]
	\centering\includegraphics[scale=0.6]{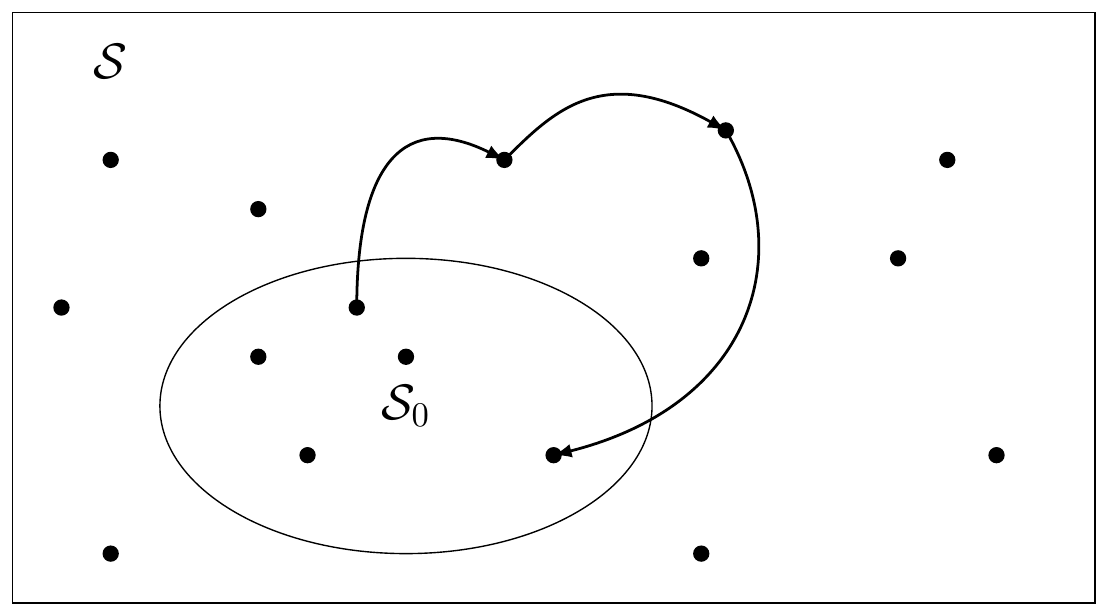}
	\vspace{-0.3cm}	
	\caption{An illustration of the state transition process of $\{Z\}_{\mathbb{N}_0}$.}
	\vspace{-0.5cm}	
	\label{fig:set}
\end{figure}

In what follows, we study the properties of $\{Z\}_{\mathbb{N}_0}$ and $\{Z(k_n)\}_{n\in\mathbb{N}_0}$ and then analyze the stability condition.

\subsection{Properties of $\{Z\}_{\mathbb{N}_0}$ and $\{Z(k_n)\}_{n\in\mathbb{N}_0}$}
For ease of analysis, we need the following assumption about the aggregated process $\{(B,B',N)\}_{\mathbb{N}_0}$.
\begin{assumption}\label{assump:BBN}
	\normalfont
	$\{(B,B',N)\}_{\mathbb{N}_0} \in \mathcal{B} \times \mathcal{B}' \times\mathcal{N}$ is an irreducible and aperiodic (IA) Markov process.
\end{assumption}
Note that wireless fading channel conditions are commonly modeled as IA Markov processes~\cite{Parastoo,KangJIoT,liu2020remote,liu2021remote}, thus it is reasonable to consider an IA Markov process $\{(B,B')\}_{\mathbb{N}_0} \in \mathcal{B} \times \mathcal{B}'$. 
Also the computation availability $\{N\}_{\mathbb{N}_0} \in \mathcal{N}$ is commonly modeled as an IA Markov process~\cite{anytime2}. Since an aggregation of IA Markov processes is still an IA Markov process~\cite{durrett2019probability}, it is reasonable to consider the IA Markov process $\{(B,B',N)\}_{\mathbb{N}_0}$.

For the initial state of $\{Z\}_{\mathbb{N}_0}$, it is practical to assume that both buffers are empty:
\begin{assumption}\label{assump:initial}
	\normalfont	
	Let $\lambda_c(0)=\lambda_a(0)=0$, $B(0)=B_0 \in \mathcal{B}$, $B'(0) = B'_0 \in \mathcal{B}'$ and $N(0)=N_0\in \mathcal{N}$.
	%where there exists $b''$ and $N'$ such that $ \bm{1}(g(b'') >0 )N=0$, $p_{b'',b'}>0$ and $q_{N',N}>0$.
\end{assumption}

\begin{lemma}\label{lem:Z}
	\normalfont
	Consider Assumptions~\ref{assump:BBN} and \ref{assump:initial}. Then $\{Z\}_{\mathbb{N}_0}$ is an IA Markov process.
\end{lemma}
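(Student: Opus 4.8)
The plan is to verify the three properties in turn: the Markov property, irreducibility, and aperiodicity of $\{Z\}_{\mathbb{N}_0}$.

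\textbf{Markov property.} First I would argue that $\{Z\}_{\mathbb{N}_0}$ is a (time-homogeneous) Markov chain. The component $(B(t+1),B'(t+1),N(t+1))$ evolves as an autonomous Markov process by Assumption~\ref{assump:BBN}, independently of the buffer states. For the buffer-length components, inspection of the updating rules~\eqref{eq:V}, \eqref{eq:prob_c} and \eqref{eq:prob_a} shows that $\lambda_c(t)$ and $\lambda_a(t)$ are deterministic functions of $\lambda_c(t-1)$, $\lambda_a(t-1)$, the triple $(B(t),N(t))$, and the channel-success indicators $\gamma(t),\gamma'(t)$. The indicator $\gamma'(t)$ is conditionally Bernoulli given $B'(t)$ with parameter $1-g(h_{B'(t)})$, and $\gamma(t)$ is conditionally Bernoulli given $B(t)$ with parameter $1-\bar\gamma$; crucially these are drawn fresh each slot and are independent of the past given the current channel indices. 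Since $Z(t-1)$ contains $(\lambda_c(t-1),\lambda_a(t-1),B(t),B'(t),N(t))$, all ingredients needed to compute the distribution of $Z(t)$ are measurable with respect to $Z(t-1)$ together with the fresh randomness $(\gamma(t),\gamma'(t))$ and the one-step transition of $(B,B',N)$. Hence $\myprob{Z(t)=s \vert Z(0),\dots,Z(t-1)} = \myprob{Z(t)=s\vert Z(t-1)}$, and the transition kernel does not depend on $t$, so $\{Z\}_{\mathbb{N}_0}$ is a time-homogeneous Markov chain on the finite set $\mathcal{S}$.

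\textbf{Irreducibility and aperiodicity.} For irreducibility I would show that from any state the chain can reach the ``empty-buffer'' class and that from an empty-buffer state it can reach any state. For the first direction: from any $Z(t)$, condition on a run of $\gamma(t+1)=\gamma(t+2)=\cdots=0$ (C-A failures), which has positive probability for each slot; by~\eqref{eq:prob_a} $\lambda_a$ then decreases by one each slot until it hits $0$, and by~\eqref{eq:prob_c} $\lambda_c$ is eventually erased to $0$ as well (the cases with $\gamma(t)=0$ in \eqref{eq:prob_c} force $\lambda_c(t)=0$ once there is a fresh update, or hold it fixed otherwise, but a simultaneous $\gamma'(t)N(t)>0,\gamma(t)=0$ event — positive probability by Assumption~\ref{assump:BBN} — sets $\lambda_c(t)=0$). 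Meanwhile $(B,B',N)$ can be steered anywhere by IA-ness. For the reverse direction, starting from a state with $\lambda_a=\lambda_c=0$ and any channel/computation index, one uses a successful S-C update with $N(t)$ large and $\gamma(t)=1$ to load both buffers, then a prescribed pattern of successes/failures to realize any target $(\lambda_c,\lambda_a)$, while independently steering $(B,B',N)$ to the desired value one step ahead — all with positive probability. Composing the two directions gives mutual accessibility of all states in the (single) communicating class, i.e. irreducibility; Assumption~\ref{assump:initial} guarantees the initial state lies in $\mathcal{S}$. Aperiodicity follows because the empty-buffer state can return to itself in one step: if $\gamma'(t)N(t)=0$ and $\lambda_a(t-1)=0$ then $\lambda_c(t)=\lambda_a(t)=0$ by \eqref{eq:prob_c}--\eqref{eq:prob_a}, and since $(B,B',N)$ is aperiodic this self-loop has positive probability for a suitable channel index; a state with a self-loop in an irreducible finite chain forces period one.

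\textbf{Main obstacle.} The routine parts are the Markov property and the self-loop for aperiodicity. The delicate step is the irreducibility argument, specifically handling the coupling in \eqref{eq:V}, \eqref{eq:prob_c} and \eqref{eq:prob_a}: because $L(t)$, and hence the buffer increments, depend jointly on $B(t)$, $N(t)$ and the previous effective lengths, one must exhibit an \emph{explicit} finite sequence of channel-index values and success/failure outcomes that drives $(\lambda_c,\lambda_a)$ to an arbitrary target while simultaneously landing $(B,B',N)$ on the prescribed look-ahead value. Care is needed that the constraints $\bar N\le\Lambda_c$, $\Lambda_a\le\Lambda_c$ and $\mathcal{X}_a=\{0,\dots,\min\{\Lambda_a,\bar N\}\}$ make every state in $\mathcal{S}$ genuinely reachable and that no state in the nominal product set $\mathcal{X}_c\times\mathcal{X}_a\times\mathcal{B}\times\mathcal{B}'\times\mathcal{N}$ that is actually unreachable has been included — equivalently, one should define $\mathcal{S}$ as precisely the set of states accessible from the initial state, which is the cleanest way to sidestep spurious transient states.
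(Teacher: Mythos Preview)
Your overall strategy matches the paper's, and the Markov-property and irreducibility arguments are fine (indeed more thorough than the paper, which simply takes $\mathcal{S}$ to be the set of states accessible from $Z(0)$ and then only verifies the return direction ``any state $\to Z(0)$'' via a run of $l>\Lambda_a$ consecutive C-A dropouts while $(B,B',N)$ is steered back to $(B_0,B'_0,N_0)$).

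There is, however, a genuine gap in your aperiodicity step. You assert that the empty-buffer state has a one-step self-loop ``since $(B,B',N)$ is aperiodic,'' but aperiodicity of a finite IA chain does \emph{not} imply that any state has positive one-step return probability (e.g.\ the three-state chain with transitions $1\!\to\!2$, $2\!\to\!3$, $3\!\to\!1$, $1\!\to\!3$ is IA yet has no self-loop). So you cannot conclude that $(0,0,B,B',N)\to(0,0,B,B',N)$ in one step with positive probability for any choice of $(B,B',N)$. The paper avoids this by working with the gcd characterisation of the period: whenever $(B_0,B'_0,N_0)$ can return to itself in $t$ steps, the augmented state $(0,0,B_0,B'_0,N_0)$ can also return to itself in $t$ steps by forcing $t$ consecutive C-A dropouts (so the buffers stay empty throughout). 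Hence the set of return times of $(0,0,B_0,B'_0,N_0)$ contains that of $(B_0,B'_0,N_0)$, and its gcd is therefore $1$. Replacing your self-loop claim with this gcd-inclusion argument closes the gap and brings your proof in line with the paper's.
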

\begin{proof}
(1) Irreducibility. Considering the initial state $Z(0)$, to prove that it is possible to get to any state from any state in $\{Z\}_{\mathbb{N}_0}$, we only need to show that any state $(\tilde{\lambda}_c,\tilde{\lambda}_a,\tilde{B},\tilde{B}',\tilde{N}) \in \mathcal{S}$ can return to $Z(0)$~\cite{durrett2019probability}.
Since $\{(B,B',N)\}_{\mathbb{N}_0}$ is an IA Markov process, we assume that $(\tilde{B},\tilde{B}',\tilde{N})$ can get to $(B_0,B'_0,N_0)$ in $l$ steps with probability $p>0$, where $l>\Lambda_a$. Since $\bar{\gamma}>0$, when the packet dropout event consecutively occurs for $l$ times with probability $\bar{\gamma}^l>0$, the effective buffer lengths will be zero at both controller and actuator. Thus, the state $(\tilde{\lambda}_c,\tilde{\lambda}_a,\tilde{B},\tilde{B}',\tilde{N})$ can return to $Z(0)$ in $l$ steps with a non-zero probability.

(2) Aperiodicity. To prove the aperiodicity of $\{Z\}_{\mathbb{N}_0}$, we only need to show that a single state state $Z(0)$ is aperiodic due to the irreducebility of $\{Z\}_{\mathbb{N}_0}$~\cite{durrett2019probability}.
Due to the aperiodicity of $\{(B,B',N)\}_{\mathbb{N}_0}$, the period of the state $(B_0,B'_0,N_0)$ is $1$ and is written as~\cite{durrett2019probability}
\begin{equation}\label{eq:gcd}
\begin{aligned}
&\gcd\{t>0:   \myprob{(B(t),B'(t),N(t))=(B_0,B'_0,N_0) \vert \right.\\
&\left. \hspace{2.2cm} (B(0),B'(0),N(0))=(B_0,B'_0,N_0)}>0\}=1.
\end{aligned}
\end{equation}
If state $(B_0,B'_0,N_0)$ can reach itself in $t\in \mathbb{N}$ steps, due to the non-zero probability of consecutive packet dropout of $t$ times, the state $(0,0,B_0,B'_0,N_0)$ can reach itself in $t$ steps as well.
%$$\myprob{(\lambda_c(t),\lambda_c(t),B(t+1),B'(t+1),N(t+1))=(0,0,B,B',N) \vert (\lambda_c(0),\lambda_c(0),B(1),B'(1),N(1))=(0,0,B,B',N)}>0.$$
Then, from \eqref{eq:gcd}, the period of the state $(0,0,B_0,B'_0,N_0)$ is $1$. 
\end{proof}

From Lemma~\ref{lem:Z}, $\{Z\}_{\mathbb{N}_0}$ has a unique stationary distribution.
We shall denote the state transition probability of $\{Z\}_{\mathbb{N}_0}$ as per
\begin{equation}\label{eq:v_ij}
\begin{aligned}
v_{i,j}& \triangleq \myprob{Z(t+1)=s_j \vert Z(t)=s_i}, \forall s_i,s_j \in \mathcal{S},  t\in \mathbb{N}_0.
\end{aligned}
\end{equation}
Note that $v_{i,j}$ can be numerically calculated based on \eqref{eq:prob_B}, \eqref{eq:prob_N}, \eqref{eq:prob_c} and \eqref{eq:prob_a}, though it does not have a closed-from expression due to the complexity introduced by the dual-buffer updating process \eqref{eq:prob_c} and \eqref{eq:prob_a}.

Let $\mathbf{V}\in \mathbb{R}^{S\times S}$ denote the state transition probability matrix, i.e., $[\mathbf{V}]_{i,j}=v_{i,j}$, and 
\begin{equation}\label{eq:V_devide}
\mathbf{V} = \begin{bmatrix}
\mathbf{V}_{0,0} &\aug & \mathbf{V}_{0,1} \\ \hline
\mathbf{V}_{1,0} &\aug & \mathbf{V}_{1,1} \\
\end{bmatrix}
\end{equation}
where $\mathbf{V}_{0,0}\in\mathbb{R}^{S_0 \times S_0}$, $\mathbf{V}_{0,1}\in\mathbb{R}^{S_0 \times (S-S_0)}$,
$\mathbf{V}_{1,0}\in\mathbb{R}^{(S-S_0) \times S_0}$ and $\mathbf{V}_{1,1}\in\mathbb{R}^{(S-S_0) \times (S-S_0)}$.

\begin{lemma}\label{lem:transition}
	\normalfont
	$\{Z(k_n)\}_{n\in\mathbb{N}_0}$ is an IA Markov process under Assumptions~\ref{assump:BBN} and \ref{assump:initial}. The state transition probability matrix is 
	
\begin{equation}\notag
\tilde{\mathbf{V}} \triangleq \sum_{l=1}^{\infty} \mathbf{D}(l) \in \mathbb{R}^{S_0 \times S_0},
\end{equation}
where	
	\begin{equation}\notag
	\mathbf{D}(l)=\begin{cases}
	\mathbf{V}_{0,0},& l=1\\
	\mathbf{V}_{0,1} \mathbf{V}_{1,1}^{l-2} \mathbf{V}_{1,0} ,& l>1
	\end{cases}
	\end{equation}
The stationary distribution of $s_i \in \mathcal{S}_0$, $\pi_i$, is the unique solution of
\begin{equation}
\bm{\pi}^\top \tilde{\mathbf{V}} = \bm{\pi}^\top,
\end{equation}
where $\bm{\pi} \triangleq [\pi_1,\pi_2,\cdots,\pi_{S_0}]^\top$.
\end{lemma}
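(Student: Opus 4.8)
The plan is to recognize $\{Z(k_n)\}_{n\in\mathbb{N}_0}$ as the chain $\{Z\}_{\mathbb{N}_0}$ \emph{watched on} $\mathcal{S}_0$, i.e.\ observed only at its successive visits to $\mathcal{S}_0$, and then to compute its kernel by conditioning on the cycle length $\Delta_{n+1}$ of \eqref{eq:delta}.

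\emph{Step 1 (Markov property and homogeneity).} By Lemma~\ref{lem:Z}, $\{Z\}_{\mathbb{N}_0}$ is an irreducible Markov chain on the finite set $\mathcal{S}$, hence positive recurrent; since $s_0\in\mathcal{S}_0$ by Assumption~\ref{assump:initial}, $\mathcal{S}_0$ is visited infinitely often and every $k_n$ is a.s.\ finite. Each $k_n$ is a stopping time of the natural filtration of $\{Z\}_{\mathbb{N}_0}$ (the event $\{k_n=t\}$ depends only on $Z(0),\dots,Z(t)$). Applying the strong Markov property at $k_n$, conditionally on $Z(k_0),\dots,Z(k_n)$ the post-$k_n$ trajectory is a copy of $\{Z\}_{\mathbb{N}_0}$ started from $Z(k_n)$; in particular the conditional law of $Z(k_{n+1})$, the state at the first subsequent visit to $\mathcal{S}_0$, depends only on $Z(k_n)$ and not on $n$. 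This yields the time-homogeneous Markov property of $\{Z(k_n)\}_{n\in\mathbb{N}_0}$ on $\mathcal{S}_0$.

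\emph{Step 2 (transition matrix).} Fix $s_i,s_j\in\mathcal{S}_0$ and split the transition $Z(k_n)=s_i \to Z(k_{n+1})=s_j$ according to $\Delta_{n+1}=l$. For $l=1$ the chain moves directly within $\mathcal{S}_0$, contributing $[\mathbf{V}_{0,0}]_{i,j}$. For $l\geq 2$ it leaves $\mathcal{S}_0$ on the first step, spends $l-2$ steps inside $\mathcal{S}\setminus\mathcal{S}_0$, and re-enters $\mathcal{S}_0$ at $s_j$ on the last step; summing over intermediate states this contributes $[\mathbf{V}_{0,1}\mathbf{V}_{1,1}^{l-2}\mathbf{V}_{1,0}]_{i,j}=[\mathbf{D}(l)]_{i,j}$. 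Summing over $l$ gives $\tilde{\mathbf{V}}=\sum_{l\geq 1}\mathbf{D}(l)$. Convergence follows from $\lambda_{\max}(\mathbf{V}_{1,1})<1$: for $s_i\in\mathcal{S}\setminus\mathcal{S}_0$, $\sum_{j}[\mathbf{V}_{1,1}^{m}]_{i,j}=\myprob{Z(1),\dots,Z(m)\notin\mathcal{S}_0 \vert Z(0)=s_i}\to 0$ as $m\to\infty$ by positive recurrence of $\{Z\}_{\mathbb{N}_0}$, so $\mathbf{V}_{1,1}^{m}\to\mathbf{0}$ and $\sum_{m\geq 0}\mathbf{V}_{1,1}^{m}=(\mathbf{I}-\mathbf{V}_{1,1})^{-1}$ converges; thus $\tilde{\mathbf{V}}=\mathbf{V}_{0,0}+\mathbf{V}_{0,1}(\mathbf{I}-\mathbf{V}_{1,1})^{-1}\mathbf{V}_{1,0}$ is a well-defined stochastic matrix, its rows summing to one because $\myprob{k_{n+1}<\infty}=1$.

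\emph{Step 3 (irreducibility, aperiodicity, stationary law).} Irreducibility: given $s_i,s_j\in\mathcal{S}_0$, irreducibility of $\{Z\}_{\mathbb{N}_0}$ provides a positive-probability path from $s_i$ to $s_j$ whose successive passages through $\mathcal{S}_0$ form a positive-probability path of the watched chain, so $[\tilde{\mathbf{V}}^{m}]_{i,j}>0$ for some $m$. Aperiodicity needs a structural observation, since it is \emph{not} inherited from $\{Z\}_{\mathbb{N}_0}$ in general: starting from the empty-buffer state $s_0=(0,0,B_0,B'_0,N_0)=Z(0)$, if the C-A channel drops the packet ($\gamma(t)=0$) at every slot $t=1,\dots,m$ — an event of conditional probability $\bar{\gamma}^{m}>0$ — then \eqref{eq:prob_c}--\eqref{eq:prob_a} force $\lambda_c(t)=\lambda_a(t)=0$ for $t=1,\dots,m$, so $k_j=j$ for $j\le m$ and $Z(k_m)=Z(m)=(0,0,B(m+1),B'(m+1),N(m+1))$. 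Combining with any positive-probability $m$-step loop of $\{(B,B',N)\}_{\mathbb{N}_0}$ from $(B_0,B'_0,N_0)$ back to itself gives $[\tilde{\mathbf{V}}^{m}]_{s_0,s_0}>0$; by Assumption~\ref{assump:BBN} the set of such $m$ has $\gcd$ equal to $1$, so $s_0$ has period $1$ in the watched chain and, by irreducibility, $\{Z(k_n)\}_{n\in\mathbb{N}_0}$ is IA. A finite irreducible chain has a unique stationary distribution, so $\bm{\pi}$ is the unique solution of $\bm{\pi}^\top\tilde{\mathbf{V}}=\bm{\pi}^\top$ (with $\bm{\pi}\geq\mathbf{0}$, $\bm{\pi}^\top\mathbf{1}=1$).

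The main obstacle is the aperiodicity claim in Step~3: one cannot appeal to aperiodicity of $\{Z\}_{\mathbb{N}_0}$ directly (the watched chain of an aperiodic chain may be periodic), and must instead exploit the fact that empty buffers are ``absorbing until a successful C-A transmission,'' which lets the $(B,B',N)$-component evolve freely while the state stays in $\mathcal{S}_0$ and thereby imports its period-$1$ property into $\tilde{\mathbf{V}}$. The convergence of $\sum_l\mathbf{D}(l)$, i.e.\ $\lambda_{\max}(\mathbf{V}_{1,1})<1$, is the other technical point but is routine given finiteness and irreducibility of $\{Z\}_{\mathbb{N}_0}$.
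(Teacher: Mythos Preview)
Your proof is correct and follows essentially the same route as the paper: aperiodicity is obtained by observing that consecutive C-A packet dropouts force $(\lambda_c,\lambda_a)=(0,0)$ so that the watched chain at a state $(0,0,b,b',n)$ inherits the period of $\{(B,B',N)\}_{\mathbb{N}_0}$, and the transition kernel is obtained by decomposing on the cycle length~$\Delta_{n+1}$. Your presentation is in fact more careful than the paper's on several points---you make the strong Markov property explicit in Step~1, you justify convergence of $\sum_{l\ge 1}\mathbf{D}(l)$ via $\mathbf{V}_{1,1}^{m}\to\mathbf{0}$, and you rightly flag that aperiodicity of a watched chain is \emph{not} automatic---but the underlying argument coincides with the paper's.
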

\begin{proof}
	The irreducibility of $\{Z(k_n)\}_{n\in\mathbb{N}_0}$ is obvious, because  $\{Z\}_{\mathbb{N}_0}$ is irreducible. For the aperiodicity, we only need to prove that there exist one state of $\{Z(k_n)\}_{n\in\mathbb{N}_0}$ with period $1$~\cite{durrett2019probability}.
Due to the aperiodicity of $\{(B,B',N)\}_{\mathbb{N}_0}$, the period of the state $(B,B',N)$ is $1$.
If the state $(B,B',N)$ can reach itself in $n\in \mathbb{N}$ steps, due to the non-zero probability of consecutive packet dropout of $n$ times, the state $(0,0,B,B',N) \in \mathcal{S}_0$ can reach itself in $n\in \mathbb{N}$ steps as well without passing through any state within $\mathcal{S}_1 \triangleq \mathcal{S}\backslash \mathcal{S}_0$.
Therefore, the period of the state $(0,0,B,B',N)$ of the process $\{Z(k_n)\}_{n\in\mathbb{N}_0}$ is $1$. In the following, we derive the state transition probability matrix.

We define the conditional probability
\begin{equation}
\begin{aligned}
& d_{i,j}(l)\\
& \!\triangleq \! \myprob{\Delta_{n+1}\!=\!l, Z(k_{n+1})\!=\! s_j\vert Z(k_n)\!=\!s_i},  s_i,s_j\!\in\! \mathcal{S}_0, l \in\! \mathbb{N},
\end{aligned}
\end{equation}
and note that
\begin{equation}\label{eq:d_ij}
d_{i,j}(1) = v_{i,j}, \forall s_{i}, s_{j} \in \mathcal{S}_0.
\end{equation}
Let $\tilde{\Delta}_{n+1}$ denote the number of steps to go from $Z(k_n) = s_i \in \mathcal{S}_0$ to $s_j \in \mathcal{S}_1$ without passing through any states in $\mathcal{S}_0$. We define the following conditional probability
\begin{equation}\label{eq:d'_ij}
\begin{aligned}
&\!\tilde{d}_{i,j}(l) \\
&\!\!\triangleq\!\! \myprob{\!\tilde{\Delta}_{n+1}\!\!=\!l,\!  Z(\!k_n\!+\!\tilde{\Delta}_{n+1}\!)\!\!=\!s_j \!\vert \! Z(\!k_n\!)\!=\!s_i, k_n \!\!+\! \tilde{\Delta}_{n+1}\!<\!k_{n+1}\!}\!,\\
& \hspace{5.3cm} \forall s_i\in \mathcal{S}_0,s_j\in \mathcal{S}_1, l \in \mathbb{N}
\end{aligned}
\end{equation}
and 
\begin{equation}
\tilde{d}_{i,j}(1)= v_{i,j}, \forall s_{i} \in \mathcal{S}_0, s_{j} \in \mathcal{S}_1.
\end{equation}
Then, it can be shown that
\begin{align} 
\label{eq:d_new}
\tilde{d}_{i,k}(l+1) &= \sum_{k' \in \mathcal{S}_1}^{} \tilde{d}_{i,k'}(l) v_{k',k},\\
\label{eq:d}
d_{i,j}(l+1) &= \sum_{k \in \mathcal{S}_1}^{} \tilde{d}_{i,k}(l) v_{k,j}.
\end{align}

Let's introduce $\mathbf{D}(l) \in \mathbb{R}^{S_0 \times S_0}$ and $\tilde{\mathbf{D}}(l) \in \mathbb{R}^{S_0 \times (S-S_0)}$, where $[\mathbf{D}(l)]_{i,j} = d_{i,j}(l)$ and $[\tilde{\mathbf{D}}(l)]_{i,j} = \tilde{d}_{i,j}(l)$.
From \eqref{eq:d_new}, we have
\begin{equation}
\tilde{\mathbf{D}}(l+1)=\tilde{\mathbf{D}}(l) \mathbf{V}_{1,1},
\end{equation}
and hence
\begin{equation} \label{eq:D_new}
\tilde{\mathbf{D}}(l) = \tilde{\mathbf{D}}(1) \mathbf{V}_{1,1}^{l-1}.
\end{equation}
Similarly, from \eqref{eq:d}, we have
\begin{equation}\label{eq:D}
\mathbf{D}(l+1)=\tilde{\mathbf{D}}(l) \mathbf{V}_{1,0}.
\end{equation}
Using \eqref{eq:d_ij}, \eqref{eq:d'_ij}, \eqref{eq:D_new} and \eqref{eq:D}, we have
\begin{equation}
\mathbf{D}(l)=\begin{cases}
\mathbf{D}(1) = \mathbf{V}_{0,0},& l=1\\
\tilde{\mathbf{D}}(1) \mathbf{V}_{1,1}^{l-2} \mathbf{V}_{1,0} 
= \mathbf{V}_{0,1} \mathbf{V}_{1,1}^{l-2} \mathbf{V}_{1,0} ,& l>1.
\end{cases}
\end{equation}
Therefore, the state transition probability matrix satisfies
$
\tilde{\mathbf{V}} \triangleq \sum_{l=1}^{\infty} \mathbf{D}(l).
$
\end{proof}
%Let $\mathbf{V}_0$ denote the state transition probability matrix, i.e., $[\mathbf{V}_0]_{i,j} \triangleq v_{i,j}$. The stationary distribution of $s_i \in \mathcal{S}$, $\pi_i$, can be obtained by solving
%\begin{equation}
%\bm{\pi}^\top \mathbf{V}_0 = \bm{\pi}^\top,
%\end{equation}
%where $\bm{\pi} \triangleq [\pi_1,\pi_2,\cdots,\pi_{S}]^\top$.

\subsection{Analysis of the Stability Condition}
%We name the time sequence between $k_n$ and $k_{n+1}$ as the $(n+1)$th cycle of the process, $\forall n \in \mathbb{N}_0$.
%Then, the amount of time steps between consecutive elements of $\mathcal{K}$ is 
%\begin{equation}
%\Delta_{n+1} = k_{n+1}-k_n.
%\end{equation}
Similar to~\cite{anytime2,anytime3,anytime5}, the stability of the WNCS depends on the statistics of $\{\Delta_n\}_{n\in\mathbb{N}}$ in~\eqref{eq:delta}, which denotes the time duration between consecutive open-loop events. Different to~\cite{anytime2,anytime5}, in the considered case,  
the process $\{\Delta_n\}_{n\in\mathbb{N}}$ is not  i.i.d.  
For WNCS with a single channel, in~\cite{anytime3} an event-triggered  setup was considered, leading to $\{\Delta_n\}_{n\in\mathbb{N}}$ which is not i.i.d. However, our current setup  is different from~\cite{anytime3}. In particular, $\{\Delta_n\}_{n\in\mathbb{N}}$ is formed by the first return time of a set of states in $\mathcal{S}_0$ rather than a single state (as in~\cite{anytime3}) . Therefore, the approach in~\cite{anytime3} cannot be adopted directly.
In the following, we propose a novel cycle-cost-based approach to obtain sufficient stability conditions.

\begin{lemma} \label{lem:Xi}
	\normalfont
The plant is stochastically stable if 
\begin{equation}\label{eq:Xi}
\sum_{n=1}^{\infty} \myexpect{\Xi(n)} <\infty,
\end{equation}
where
\begin{equation} \label{eq:Xi_def}
\Xi(n) = \alpha^n \rho^{\sum_{i=1}^{n} (\Delta_i-1)}, \forall n \in \mathbb{N},
\end{equation}
and $\rho$ and $\alpha$ were defined in {\color{black}Assumptions \ref{assum:rho} and \ref{assum:alpha}}, respectively.
\end{lemma}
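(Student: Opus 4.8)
The plan is to track the evolution of the Lyapunov function $V(\mathbf{x}(t))$ along the sample path, decomposing the time axis into the cycles delimited by the open-loop events $\mathcal{K}=\{k_n\}$. During the $(n+1)$th cycle, the actuator starts with an empty buffer (state in $\mathcal{S}_0$), so $\mathbf{u}(k_n)=\mathbf{0}$ and one application of Assumption~\ref{assum:alpha} gives a factor $\alpha$; the remaining $\Delta_{n+1}-1$ steps of the cycle are steps in which the actuator does have a buffered command. The key observation, which is the crux of the argument and which I would establish carefully, is that throughout a cycle the buffered commands applied by the actuator coincide with the tentative controls $\mathbf{u}_i(t)=\kappa(\mathbf{x}_i'(t))$ computed by the anytime algorithm, and that in the process-noise-free model the predicted states $\mathbf{x}_i'(t)$ equal the true states $\mathbf{x}(t)$ (this is exactly the remark in Section~II that $\mathbf{u}(t)=\mathbf{u}_1(t)$ holds in the perfect process-noise-free scenario whenever $\vec{\mathbf{b}}_a(t)\neq\mathbf{0}$). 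Hence each such step contracts $V$ by the factor $\rho$ via the second inequality of Assumption~\ref{assum:rho}.

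Concretely, I would first show by induction on $n$ that
\begin{equation}\notag
V(\mathbf{x}(k_n)) \leq \alpha^{n}\, \rho^{\sum_{i=1}^{n}(\Delta_i-1)}\, V(\mathbf{x}(0)) = \Xi(n)\, V(\mathbf{x}(0)),
\end{equation}
using the per-cycle estimate: over cycle $n$, one ``bad'' step (empty buffer at $k_{n-1}$, bounded by $\alpha$) followed by $\Delta_n-1$ ``good'' steps (each contracting by $\rho$). For the base case one uses Assumption~\ref{assump:initial} ($\lambda_a(0)=0$). Then, for an arbitrary time $t$ lying in the interior of cycle $n+1$, i.e. $k_n \leq t < k_{n+1}$, the same reasoning gives
\begin{equation}\notag
V(\mathbf{x}(t)) \leq \alpha\, \rho^{\,t-k_n-1}\, V(\mathbf{x}(k_n)) \leq \alpha\, V(\mathbf{x}(k_n)) \leq \alpha\,\Xi(n)\,V(\mathbf{x}(0)),
\end{equation}
since $\rho<1$. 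Using $\psi_1(|\mathbf{x}(t)|)\leq V(\mathbf{x}(t))$ and summing over all $t$, grouped by the cycle index, the total sum $\sum_{t=0}^{\infty}\psi_1(|\mathbf{x}(t)|)$ is bounded by $\alpha V(\mathbf{x}(0))\sum_{n=0}^{\infty}\Delta_{n+1}\,\Xi(n)$ up to constants. Taking expectations and using $V(\mathbf{x}(0))\leq\psi_2(|\mathbf{x}(0)|)$ with $\myexpect{\psi_2(|\mathbf{x}(0)|)}<\infty$ from Assumption~\ref{assum:alpha}, it remains to bound $\sum_n \myexpect{\Delta_{n+1}\Xi(n)}$ by $\sum_n\myexpect{\Xi(n)}$; this follows because $\Delta_{n+1}\rho^{\Delta_{n+1}-1}$ is uniformly bounded (as $\rho<1$), so $\Delta_{n+1}\Xi(n) \leq C\,\Xi(n-1)$ for a constant $C$, reindexing the sum. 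Hence $\sum_n\myexpect{\Xi(n)}<\infty$ implies $\sum_t\myexpect{\psi_1(|\mathbf{x}(t)|)}<\infty$, which is stochastic stability with the class-$\mathcal{K}_\infty$ function $\psi=\psi_1$.

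\textbf{Main obstacle.} The routine parts are the telescoping Lyapunov bookkeeping and the summability manipulation with the geometric factor. The genuinely delicate step is justifying that every actuator step inside a cycle is a true $\rho$-contraction step. One must verify that whenever the actuator applies a buffered command $\mathbf{b}_{a,1}(t)$, this command was generated as $\kappa$ evaluated at the state that the plant actually is in at time $t$ — which requires carefully checking all the buffer-updating branches in \eqref{eq:b_c}, \eqref{eq:b_a}, \eqref{eq:prob_c}, \eqref{eq:prob_a}: when a fresh sequence overwrites the actuator buffer, when old commands are shifted and new ones appended, and when commands are erased because the plant was open loop. The subtle point is that a freshly received sequence $\vec{\mathbf{u}}(t)$ was computed from $\mathbf{x}(t)$, but it is only applied starting at time $t+1$ onward (the first element $\mathbf{u}_1(t)$ corresponds to the command for the slot in which $\mathbf{x}(t)$ was measured), so one needs that no intervening uncontrolled step corrupts the alignment — which is precisely guaranteed by the erase-on-open-loop rules and by $\bar{N}\leq\Lambda_c$, $\Lambda_a\leq\Lambda_c$. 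I would handle this by an auxiliary invariant, stated and proved once, asserting that for all $t$ strictly inside a cycle, $\mathbf{b}_{a,1}(t)=\kappa(\mathbf{x}(t))$, and then the contraction is immediate.
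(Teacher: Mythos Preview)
Your overall plan matches the paper's proof: bound $V(\mathbf{x}(k_n))$ telescopically by $\Xi(n)V(\mathbf{x}(0))$, then control $\sum_t V(\mathbf{x}(t))$ cycle by cycle. The ``main obstacle'' you flag (verifying that every buffered step is a genuine $\rho$-contraction) is legitimate and is exactly the content behind the paper's terse ``By using Assumptions~\ref{assum:rho} and~\ref{assum:alpha}''; your invariant $\mathbf{b}_{a,1}(t)=\kappa(\mathbf{x}(t))$ on cycle interiors is the right formulation.

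There is, however, a genuine gap in your summability step. After obtaining $V(\mathbf{x}(t))\le \alpha\rho^{\,t-k_n-1}V(\mathbf{x}(k_n))$ you discard the factor $\rho^{\,t-k_n-1}$ and then need to control $\sum_n \myexpect{\Delta_{n+1}\Xi(n)}$. Your claimed inequality $\Delta_{n+1}\Xi(n)\le C\,\Xi(n-1)$ does not hold: since $\Xi(n)/\Xi(n-1)=\alpha\rho^{\Delta_n-1}$, one has $\Delta_{n+1}\Xi(n)=\alpha\,\Delta_{n+1}\,\rho^{\Delta_n-1}\,\Xi(n-1)$, and the indices in $\Delta_{n+1}$ and $\rho^{\Delta_n-1}$ do not match, so the uniform bound on $x\rho^{x-1}$ cannot be invoked. (Pointwise, take $\Delta_n=1$ and $\Delta_{n+1}$ arbitrarily large.) The inequality $\Delta_m\Xi(m)\le C\,\Xi(m-1)$ \emph{would} be valid, but that is not the sum you produced.

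The paper avoids this detour entirely by \emph{not} dropping the geometric factor. Summing $V(\mathbf{x}(t))$ over the cycle before bounding gives
\[
\sum_{t=k_n}^{k_{n+1}-1} V(\mathbf{x}(t))\;\le\; V(\mathbf{x}(k_n))+\alpha\!\!\sum_{l=0}^{\Delta_{n+1}-2}\!\!\rho^{\,l}\,V(\mathbf{x}(k_n))\;<\;\Bigl(1+\tfrac{\alpha}{1-\rho}\Bigr)V(\mathbf{x}(k_n)),
\]
so the cycle length $\Delta_{n+1}$ never appears as a multiplicative factor, and $\sum_t V(\mathbf{x}(t))\le\bigl(1+\tfrac{\alpha}{1-\rho}\bigr)\sum_n V(\mathbf{x}(k_n))\le\bigl(1+\tfrac{\alpha}{1-\rho}\bigr)V(\mathbf{x}(0))\sum_n\Xi(n)$ follows immediately. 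Replacing your third paragraph with this one-line geometric-series bound closes the gap and yields precisely the paper's argument.
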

\begin{proof}
	From Assumptions~\ref{assum:rho} and \ref{assum:alpha}, and Definition~\ref{def:stability}, the plant is stable if we can prove that 
	\begin{equation}
	\myexpect{\sum_{t=0}^{\infty} V(\mathbf{x}(t))}<\infty.
	\end{equation}
	
 By using Assumptions~\ref{assum:rho} and \ref{assum:alpha}, we have
	\begin{equation}
	\begin{aligned}
	\sum_{t=k_n}^{k_{n+1}-1} V(\mathbf{x}(t))
	&\leq \left(1+ \alpha \sum_{l=0}^{k_{n+1}-k_{n}-2} \rho^l\right) V(\mathbf{x}(k_n))\\
	&< \left(1+ \frac{\alpha}{1-\rho}\right) V(\mathbf{x}(k_n)),
	\end{aligned}
	\end{equation}
and hence
\begin{equation}
\sum_{t=0}^{\infty} V(\mathbf{x}(t))\leq \left(1+ \frac{\alpha}{1-\rho}\right)
\sum_{n=0}^{\infty} V(\mathbf{x}(k_n)).
\end{equation}
Then, it is easy to obtain that 
\begin{equation}
V(\mathbf{x}(k_n)) \leq \alpha^n \rho^{\sum_{i=1}^{n} (\Delta_i-1)} V(\mathbf{x}(0)).
\end{equation}
Since
$
\myexpect{V(\mathbf{x}(k_n))} \leq \alpha^n \myexpect{\rho^{\sum_{i=1}^{n} (\Delta_i-1)} } \myexpect{V(\mathbf{x}(0))}
$ and $\myexpect{V(\mathbf{x}(0))}<\infty$, \eqref{eq:Xi} is proved.
\end{proof}

From Lemma~\ref{lem:Xi}, to find the stability condition of the system, we only need to investigate the process $\{\Xi\}_{\mathbb{N}}$. However, $\{\Xi\}_{\mathbb{N}}$ is not Markovian, whereas the underlying process $\{Z(k_n)\}_{n\in \mathbb{N}_0}$ is. 
In the following, we investigate two stability conditions, with and without exploring the  state transition properties of the underlying process. The results are stated in  Theorems~\ref{theory:loose} and~\ref{theory:tight}, respectively.

Before proceeding, we need the technical lemma below.
\begin{lemma}\label{lem:loose}
	\normalfont
	The following inequality holds
	$$
	\myexpect{\Xi(n+1)} <  \left(\frac{\alpha}{\rho} \max_{i,j\in\{1,\cdots,S_0\}} r_{i,j}\right)	\myexpect{\Xi(n)}, \forall n \in \mathbb{N},
	$$
	where
	\begin{equation}\notag
	\begin{aligned}
	r_{i,j} & \triangleq \myexpect{\rho^{\Delta_{n+1}} \vert Z(k_n)=s_i,Z(k_{n+1})=s_j} \\
	&= \frac{\sum_{l=1}^{\infty} \rho^l [\mathbf{D}(l)]_{i,j}}{[\tilde{\mathbf{V}}]_{i,j}}<\infty, i,j\in\{1,\cdots,S_0\},
	\end{aligned}
	\end{equation}
	and $\mathbf{D}(l)$ and $\tilde{\mathbf{V}}$ are defined in Lemma~\ref{lem:transition}.
\end{lemma}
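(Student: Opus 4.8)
The plan is to bound $\myexpect{\Xi(n+1)}$ by conditioning on the pair of consecutive cycle-boundary states $(Z(k_n), Z(k_{n+1}))$ and peeling off the last cycle length $\Delta_{n+1}$, then invoking the Markov property of $\{Z(k_n)\}_{n\in\mathbb{N}_0}$ established in Lemma~\ref{lem:transition}. Concretely, from the definition \eqref{eq:Xi_def} we have $\Xi(n+1) = \alpha\,\rho^{\Delta_{n+1}-1}\,\Xi(n)$, so the whole task reduces to controlling the conditional expectation of $\rho^{\Delta_{n+1}-1}$ given the history up to $k_n$. Since $\rho\in(0,1)$, writing $\rho^{\Delta_{n+1}-1} = \rho^{-1}\rho^{\Delta_{n+1}}$ isolates the quantity $r_{i,j} = \myexpect{\rho^{\Delta_{n+1}}\mid Z(k_n)=s_i, Z(k_{n+1})=s_j}$ appearing in the statement.

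The key steps, in order, are: (i) establish that $\Delta_{n+1}$, conditioned on $Z(k_n)$, depends on the past only through $Z(k_n)$ — this follows because $\{Z\}_{\mathbb{N}_0}$ is Markov (Lemma~\ref{lem:Z}) and $k_n$ is a stopping time for a hitting set $\mathcal{S}_0$, so the strong Markov property applies; (ii) decompose $\myexpect{\rho^{\Delta_{n+1}}\mid Z(k_n)=s_i} = \sum_{j=1}^{S_0} \myprob{Z(k_{n+1})=s_j\mid Z(k_n)=s_i}\, r_{i,j} = \sum_j [\tilde{\mathbf{V}}]_{i,j} r_{i,j}$, using the definitions of $d_{i,j}(l)$ and $\tilde{\mathbf{V}}=\sum_l \mathbf{D}(l)$ from Lemma~\ref{lem:transition}; (iii) bound this by $\max_{i,j} r_{i,j}$ since $\sum_j [\tilde{\mathbf{V}}]_{i,j}=1$; (iv) verify the closed form $r_{i,j} = \frac{\sum_{l\geq1}\rho^l [\mathbf{D}(l)]_{i,j}}{[\tilde{\mathbf{V}}]_{i,j}}$ directly from $[\mathbf{D}(l)]_{i,j} = \myprob{\Delta_{n+1}=l, Z(k_{n+1})=s_j\mid Z(k_n)=s_i}$; (v) assemble $\myexpect{\Xi(n+1)} = \alpha\rho^{-1}\myexpect{\rho^{\Delta_{n+1}}\Xi(n)}$ and condition on the $\sigma$-algebra generated by the first $n$ cycles to pull out $\Xi(n)$ (which is measurable with respect to that history) and replace the remaining factor by its bound $\max_{i,j} r_{i,j}$.

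The main obstacle I expect is step (v): $\Xi(n)$ and $\rho^{\Delta_{n+1}}$ are not independent in general — they are linked through $Z(k_n)$ — so one cannot simply factor the expectation of the product into a product of expectations. The fix is to condition on the full cycle history $\mathcal{F}_n = \sigma(Z(k_0),\ldots,Z(k_n))$: then $\Xi(n)$ is $\mathcal{F}_n$-measurable and can be taken out of the inner conditional expectation, while $\myexpect{\rho^{\Delta_{n+1}}\mid \mathcal{F}_n} = \myexpect{\rho^{\Delta_{n+1}}\mid Z(k_n)} \leq \max_{i,j} r_{i,j}$ by steps (i)--(iv); taking the outer expectation then yields the claim. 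A secondary (routine) point is checking $r_{i,j}<\infty$, which holds because $\rho<1$ makes $\sum_l \rho^l[\mathbf{D}(l)]_{i,j}$ converge (it is dominated by $\sum_l \rho^l < \infty$), and the ratio is well-defined whenever $[\tilde{\mathbf{V}}]_{i,j}>0$; indices $j$ with $[\tilde{\mathbf{V}}]_{i,j}=0$ contribute nothing and are excluded by the convention on conditional probabilities stated in the Notation. Finally one should note the inequality is strict because $\Delta_{n+1}\geq1$ with positive probability of being strictly larger, so $\rho^{\Delta_{n+1}} < \rho$ on a set of positive measure — or, more simply, strictness is inherited from bounding $\sum_j [\tilde{\mathbf{V}}]_{i,j} r_{i,j}$ by the maximum and the fact that $\alpha/\rho \cdot \max r_{i,j}$ already overshoots.
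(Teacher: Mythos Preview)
Your proposal is correct and follows essentially the same route as the paper: both start from the factorization $\Xi(n+1)=\frac{\alpha}{\rho}\rho^{\Delta_{n+1}}\Xi(n)$, separate $\Xi(n)$ from $\rho^{\Delta_{n+1}}$ via the (strong) Markov property at $k_n$, bound the conditional expectation of $\rho^{\Delta_{n+1}}$ by $\max_{i,j} r_{i,j}$, and then verify the closed form for $r_{i,j}$ using $[\mathbf{D}(l)]_{i,j}=d_{i,j}(l)$. The only cosmetic difference is that the paper conditions directly on the pair $(Z(k_n),Z(k_{n+1}))$ and (implicitly) uses conditional independence of past and future given $Z(k_n)$ to factor $\myexpect{\Xi(n)\rho^{\Delta_{n+1}}\mid Z(k_n)=s_i,Z(k_{n+1})=s_j}$, whereas you condition on the full cycle history $\mathcal{F}_n$ and pull out $\Xi(n)$ by measurability---your version is slightly more explicit but the argument is the same.
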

\begin{proof}
From \eqref{eq:Xi_def}, it can be shown that
	\begin{equation}
	\Xi(n+1) = \Xi(n) \frac{\alpha}{\rho} \rho^{\Delta_{n+1}},
	\end{equation}
	and hence
	\begin{equation}
	\begin{aligned}
	&\myexpect{\Xi(n+1)} = \myexpect{\Xi(n) \frac{\alpha}{\rho} \rho^{\Delta_{n+1}}}\\
	&= \frac{\alpha}{\rho} \sum_{i=1}^{S_0}\sum_{j=1}^{S_0} \myexpect{\Xi(n)  \rho^{\Delta_{n+1}} \vert Z(k_n)=s_i, Z(k_{n+1})=s_j} \\
	& \hspace{4cm} \times \myprob{Z(k_n)=s_i, Z(k_{n+1})=s_j}\\
	&\leq  \frac{\alpha}{\rho} \max_{i,j\in\{1,\cdots,S_0\}} r_{i,j} \sum_{i=1}^{S_0}\sum_{j=1}^{S_0} \myexpect{\Xi(n)  \vert Z(k_n)=s_i, Z(k_{n+1})=s_j}\\ &\hspace{4cm} \times \myprob{Z(k_n)=s_i, Z(k_{n+1})=s_j}\\
	&= \frac{\alpha}{\rho}  \max_{i,j\in\{1,\cdots,S_0\}} r_{i,j}  \myexpect{\Xi(n) },
	\end{aligned}
	\end{equation}
where
\begin{equation}\label{eq:r_ij}
\begin{aligned}
r_{i,j}& \triangleq \myexpect{\rho^{\Delta_{n+1}} \vert Z(k_n)=s_i,Z(k_{n+1})=s_j}\\
&=\sum_{\Delta_{n+1}=1}^{\infty}\rho^{l}\frac{\myprob{\Delta_{n+1}=l, Z(k_{n+1})=s_j \vert  Z(k_n)=s_i}}{\myprob{Z(k_{n+1})=s_j \vert Z(k_n)=s_i}}\\
&= \frac{\sum_{l=1}^{\infty} \rho^l [\mathbf{D}(l)]_{i,j}}{[\tilde{\mathbf{V}}]_{i,j}},
\end{aligned}
\end{equation}
and $r_{i,j}$ is bounded due to the fact that $\sum_{l=1}^{\infty} [\mathbf{D}(l)]_{i,j}<1$ and $\rho <1$.	
\end{proof}
Using Lemmas~\ref{lem:Xi} and~\ref{lem:loose}, it is straightforward to have the following result.
\begin{theorem}\label{theory:loose}
\normalfont	
	Suppose that Assumptions~\ref{assum:rho}-\ref{assump:initial} hold. The system~\eqref{eq:sys} is stochastically stable if 
	$$\Omega' \triangleq \frac{\alpha}{\rho}\max_{i,j\in\{1,\cdots,S_0\}} r_{i,j} < 1.$$
\end{theorem}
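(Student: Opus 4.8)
The plan is to chain together Lemma~\ref{lem:Xi} and Lemma~\ref{lem:loose} directly. Lemma~\ref{lem:Xi} reduces stochastic stability of \eqref{eq:sys} to showing $\sum_{n=1}^{\infty}\myexpect{\Xi(n)}<\infty$, and Lemma~\ref{lem:loose} gives a one-step multiplicative bound $\myexpect{\Xi(n+1)}<\Omega'\,\myexpect{\Xi(n)}$ for every $n\in\mathbb{N}$, where $\Omega'=\frac{\alpha}{\rho}\max_{i,j}r_{i,j}$. So the core of the proof is simply: iterate the bound of Lemma~\ref{lem:loose} to get $\myexpect{\Xi(n)}<(\Omega')^{n-1}\myexpect{\Xi(1)}$, and observe that when $\Omega'<1$ the geometric series $\sum_{n=1}^{\infty}(\Omega')^{n-1}\myexpect{\Xi(1)}$ converges, hence $\sum_{n=1}^{\infty}\myexpect{\Xi(n)}<\infty$.

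First I would check that $\myexpect{\Xi(1)}$ is finite, so that the geometric majorant is a genuine finite bound; this is immediate since $\Xi(1)=\alpha\rho^{\Delta_1-1}\le \alpha$ (as $\rho\in(0,1)$ and $\Delta_1\ge 1$), so $\myexpect{\Xi(1)}\le\alpha<\infty$. Then I would write the induction: the base case $n=1$ is trivial, and the inductive step is exactly one application of Lemma~\ref{lem:loose}, namely $\myexpect{\Xi(n+1)}<\Omega'\myexpect{\Xi(n)}<\Omega'(\Omega')^{n-1}\myexpect{\Xi(1)}=(\Omega')^{n}\myexpect{\Xi(1)}$. Summing over $n\ge 1$ gives $\sum_{n=1}^{\infty}\myexpect{\Xi(n)}<\myexpect{\Xi(1)}\sum_{n=1}^{\infty}(\Omega')^{n-1}=\frac{\myexpect{\Xi(1)}}{1-\Omega'}<\infty$ precisely because $\Omega'<1$. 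By Lemma~\ref{lem:Xi} this establishes stochastic stability of \eqref{eq:sys}.

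There is essentially no obstacle here: the theorem is a corollary, and the excerpt itself flags this (``it is straightforward to have the following result''). The only point requiring a moment's care is the finiteness of $\myexpect{\Xi(1)}$, which could in principle be argued either via the crude bound $\Xi(1)\le\alpha$ above or via $\myexpect{\rho^{\Delta_1-1}}=\rho^{-1}\sum_{i}\pi_i^{(0)}\sum_{j}\sum_{l\ge1}\rho^l[\mathbf{D}(l)]_{i,j}$ together with the boundedness of the quantities $r_{i,j}$ already established in Lemma~\ref{lem:loose}; the crude bound is cleaner and suffices. One should also note that Lemma~\ref{lem:loose} is stated for all $n\in\mathbb{N}$, so the iteration is valid from $n=1$ onward with no exceptional first step to handle separately. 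Thus the whole argument is a two-line combination of the two preceding lemmas plus a geometric-series summation.
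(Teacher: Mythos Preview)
Your proposal is correct and matches the paper's approach exactly: the paper states that Theorem~\ref{theory:loose} is ``straightforward'' from Lemmas~\ref{lem:Xi} and~\ref{lem:loose}, and your argument---iterate the one-step bound to obtain $\myexpect{\Xi(n)}<(\Omega')^{n-1}\myexpect{\Xi(1)}$, note $\myexpect{\Xi(1)}\le\alpha<\infty$, and sum the geometric series---is precisely the intended two-line corollary.
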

\begin{remark}
For the special case with $S_0 = 1$, i.e., there exists only a single state in $\mathcal{S}$ leading to no control being applied, Lemma~\ref{lem:transition} shows that the processes $\{Z(k_n)\}_{n\in\mathbb{N}_0}$ is i.i.d.  and so is the process $\{\Delta_n\}_{n\in\mathbb{N}}$. Such a special case is identical to that of \cite{anytime2}, and we see that
the stability condition in Theorem~\ref{theory:loose} reduces to
\begin{equation}
\Omega' = \frac{\alpha}{\rho} \sum_{l=1}^{\infty} \rho^l \myprob{\Delta_n=l} < 1,
\end{equation}
which is identical to that of~\cite{anytime2}.
\end{remark}
\begin{remark}
The sufficient condition in Theorem \ref{theory:loose} is obtained by considering the worst case scenario of the state transitions, i.e., considering the pair of states $s_i$ and $s_j$ such that it has the largest conditional expectation $\myexpect{\rho^{\Delta_{n+1}} \vert Z(k_n)=s_i,Z(k_{n+1})=s_j}$.
However, such a method does not take into account the state transition probabilities in  $\mathbf{\tilde{V}}$ defined in Lemma~\ref{lem:transition}. Thus, the sufficient condition in Theorem~\ref{theory:loose} is conservative.
\end{remark}

The following lemma is needed to obtain a less conservative stability condition in Theorem~\ref{theory:tight}.
\begin{lemma} \label{lem:Omega}
\normalfont
Assuming that $\{Z\}_{\mathbb{N}_0}$ evolves in the steady state, for any arbitrarily small $\epsilon>0$, there exist $\mu>0$ and $K>0$ such that
\begin{equation}\label{eq:lem:key} 
\myexpect{\Xi(n)} < \mu \Big(\frac{\alpha}{\rho} \left(\lambda_{\max}(\mathbf{U})+\epsilon\right)\Big)^{n-1}, \forall n> K,
\end{equation}
where
$
\mathbf{U} \triangleq
\begin{bmatrix}
\mathbf{U}_1^\top &\aug &
\mathbf{U}_2^\top &\aug & 
\cdots &\aug& 
\mathbf{U}_{S_0}^\top 
\end{bmatrix}^\top\in \mathbb{R}^{S^2_0 \times S^2_0},
$
\begin{equation}  \notag
\mathbf{U}_i \triangleq
\begin{bmatrix}
r_{1,i} \mathbf{f}^\top_1& \mathbf{0} &\cdots &\mathbf{0}\\
\mathbf{0} & r_{2,i} \mathbf{f}^\top_2 &\cdots &\mathbf{0}\\
\vdots & \cdots &\ddots & \vdots \\
\mathbf{0} & \cdots &\mathbf{0} & r_{S_0,i} \mathbf{f}^\top_{S_0}
\end{bmatrix}  \in \mathbb{R}^{S_0 \times S^2_0},
\end{equation}
and $\begin{bmatrix}
\mathbf{f}_1 &\aug & \mathbf{f}_2&\aug & \cdots &\aug&\mathbf{f}_{S_0}
\end{bmatrix} \triangleq \mathbf{Z} \tilde{\mathbf{V}} \mathbf{Z}^{-1}$ and  $\mathbf{Z} = \text{diag}\{\bm{\pi}\}$.
\end{lemma}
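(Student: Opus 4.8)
The plan is to sharpen the crude one-step bound of Lemma~\ref{lem:loose} by tracking, in place of the scalar $\myexpect{\Xi(n)}$, the vector of conditional expectations of $\Xi(n)$ given the \emph{pair} of consecutive embedded states $(Z(k_{n-1}),Z(k_n))$, showing that this vector obeys a time-invariant linear recursion with system matrix $\frac{\alpha}{\rho}\mathbf{U}$, and then invoking the fact that the powers of any square matrix grow no faster than $(\lambda_{\max}(\cdot)+\epsilon)^{n}$.

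\textbf{Step 1: the pair-indexed recursion.} From the identity $\Xi(n+1)=\frac{\alpha}{\rho}\rho^{\Delta_{n+1}}\Xi(n)$ derived in the proof of Lemma~\ref{lem:loose}, define for $s_i,s_j\in\mathcal{S}_0$ the quantity $\hat{c}_{i,j}(n)\triangleq\myexpect{\Xi(n)\,\vert\,Z(k_{n-1})=s_i,Z(k_n)=s_j}$ (set to $0$ when $[\tilde{\mathbf{V}}]_{i,j}=0$, a zero-probability pair). Each $k_n$ is a stopping time for $\{Z\}_{\mathbb{N}_0}$ (the $n$-th hitting time of $\mathcal{S}_0$) and $\Xi(n)$ depends only on the path up to $k_n$; hence the strong Markov property makes $\Xi(n)$ and the forward excursion $(\Delta_{n+1},Z(k_{n+1}))$ conditionally independent given $Z(k_n)$. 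Combining this with the identity $\myexpect{\rho^{\Delta_{n+1}}\mathbf{1}\{Z(k_{n+1})=s_m\}\,\vert\,Z(k_n)=s_j}=\sum_{l\ge1}\rho^l[\mathbf{D}(l)]_{j,m}=r_{j,m}[\tilde{\mathbf{V}}]_{j,m}$ from Lemma~\ref{lem:loose}, and with the steady-state identity $\myprob{Z(k_{n-1})=s_i,Z(k_n)=s_j}=\pi_i[\tilde{\mathbf{V}}]_{i,j}$, a direct computation yields a recursion of the form
\[
\hat{c}_{j,m}(n+1)=\frac{\alpha}{\rho}\,r_{j,m}\sum_{i=1}^{S_0}[\mathbf{f}_j]_i\,\hat{c}_{i,j}(n),
\]
where $[\mathbf{f}_j]_i=\pi_i[\tilde{\mathbf{V}}]_{i,j}/\pi_j$ is exactly the $(i,j)$-entry of $\mathbf{Z}\tilde{\mathbf{V}}\mathbf{Z}^{-1}$ from the statement. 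Stacking the $\hat{c}_{i,j}(n)$ into a vector $\hat{\mathbf{c}}(n)\in\mathbb{R}^{S_0^2}$ ordered as in the statement, this is the linear recursion governed by $\frac{\alpha}{\rho}\mathbf{U}$: the block-diagonal form of each $\mathbf{U}_i$ encodes that $\hat{c}_{j,m}(n+1)$ couples only to the $S_0$ pairs sharing the index $j$, carrying the weights $[\mathbf{f}_j]_i$, while the scalars $r_{j,m}$ supply the per-block scaling.

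\textbf{Step 2: from the recursion to \eqref{eq:lem:key}.} Iterating gives $\hat{\mathbf{c}}(n)=\big(\tfrac{\alpha}{\rho}\big)^{n-1}\mathbf{U}^{\,n-1}\hat{\mathbf{c}}(1)$. Since $\Delta_1\ge1$ we have $\Xi(1)=\alpha\rho^{\Delta_1-1}\le\alpha$, so $\hat{\mathbf{c}}(1)$ is finite entrywise; and in steady state $\myexpect{\Xi(n)}=\sum_{i,j}\pi_i[\tilde{\mathbf{V}}]_{i,j}\hat{c}_{i,j}(n)\le\|\hat{\mathbf{c}}(n)\|_1$. By Gelfand's formula $\|\mathbf{U}^{p}\|^{1/p}\to\lambda_{\max}(\mathbf{U})$, so for any $\epsilon>0$ there is $K$ such that $\|\mathbf{U}^{\,n-1}\|\le(\lambda_{\max}(\mathbf{U})+\epsilon)^{n-1}$ for all $n>K$; using a submultiplicative matrix norm together with norm equivalence on $\mathbb{R}^{S_0^2}$, one obtains $\myexpect{\Xi(n)}\le c\,\|\hat{\mathbf{c}}(1)\|_1\big(\tfrac{\alpha}{\rho}\big)^{n-1}(\lambda_{\max}(\mathbf{U})+\epsilon)^{n-1}$ for $n>K$, which is \eqref{eq:lem:key} with $\mu\triangleq c\,\|\hat{\mathbf{c}}(1)\|_1$.

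\textbf{Main obstacle.} The substantive part is Step~1: showing that the pair-indexed conditional expectations close under a \emph{time-invariant} recursion with matrix $\frac{\alpha}{\rho}\mathbf{U}$. This hinges on (i) the strong Markov property at the random times $k_n$, which decouples the path up to $k_n$ (carrying $\Xi(n)$) from the forward excursion given $Z(k_n)$, and (ii) the steady-state hypothesis, which makes $\myprob{Z(k_{n-1})=s_i,Z(k_n)=s_j}$ independent of $n$ and thereby turns the normalization $\hat{c}=c/\myprob{\cdot}$ into the fixed reweighting $\mathbf{Z}\tilde{\mathbf{V}}\mathbf{Z}^{-1}$, i.e.\ the vectors $\mathbf{f}_j$. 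The remainder is bookkeeping: matching the block layout of $\mathbf{U}$ to the recursion and disposing of zero-probability pairs (consistently set to zero, since the corresponding entries of $\mathbf{U}$, being proportional to $[\mathbf{f}_j]_i=\pi_i[\tilde{\mathbf{V}}]_{i,j}/\pi_j$, vanish). Once the recursion is in hand, passing to \eqref{eq:lem:key} via the spectral radius is standard.
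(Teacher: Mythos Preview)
Your proposal is correct and follows essentially the same route as the paper: you track the pair-conditioned expectations $\hat{c}_{i,j}(n)=\myexpect{\Xi(n)\mid Z(k_{n-1})=s_i,Z(k_n)=s_j}$, derive the same linear recursion $\hat{\mathbf{c}}(n+1)=\tfrac{\alpha}{\rho}\mathbf{U}\,\hat{\mathbf{c}}(n)$ via the Markov factorisation and the stationary backward weights $f_{i',i}=\pi_{i'}[\tilde{\mathbf{V}}]_{i',i}/\pi_i$, iterate, and bound the growth of $\mathbf{U}^{n-1}$ by $(\lambda_{\max}(\mathbf{U})+\epsilon)^{n-1}$. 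The only cosmetic differences are that the paper bounds $\myexpect{\Xi(n)}$ by the maximum entry of the vector rather than its $\ell_1$-norm, and cites an entrywise growth lemma instead of Gelfand's formula directly; both yield the same conclusion.
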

\begin{proof}
See Appendix A.
\end{proof}

From Lemmas~\ref{lem:Xi} and~\ref{lem:Omega}, it is straightforward to derive the following result.
\begin{theorem} \label{theory:tight}
	\normalfont
Suppose that Assumptions~\ref{assum:rho}-\ref{assump:initial} hold. System~\eqref{eq:sys} is stochastically stable if 
$$\Omega \triangleq \frac{\alpha}{\rho}\lambda_{\max}(\mathbf{U}) < 1.$$
\end{theorem}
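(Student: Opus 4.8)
The plan is to combine Lemma~\ref{lem:Xi} with Lemma~\ref{lem:Omega} in essentially one line, so the ``proof'' of Theorem~\ref{theory:tight} is really just a convergence check on a geometric-type series. First I would recall that Lemma~\ref{lem:Xi} tells us that stochastic stability follows once we establish $\sum_{n=1}^{\infty} \myexpect{\Xi(n)} < \infty$. So the whole task reduces to showing this sum is finite under the hypothesis $\Omega = \frac{\alpha}{\rho}\lambda_{\max}(\mathbf{U}) < 1$.

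Next I would invoke Lemma~\ref{lem:Omega}: for any $\epsilon > 0$ there are constants $\mu > 0$ and $K > 0$ with $\myexpect{\Xi(n)} < \mu\big(\frac{\alpha}{\rho}(\lambda_{\max}(\mathbf{U}) + \epsilon)\big)^{n-1}$ for all $n > K$. The key observation is that the base $\frac{\alpha}{\rho}\lambda_{\max}(\mathbf{U}) = \Omega$ is strictly less than $1$ by assumption, so by continuity I can pick $\epsilon$ small enough that $\frac{\alpha}{\rho}(\lambda_{\max}(\mathbf{U}) + \epsilon) =: \theta < 1$ as well. Then the tail $\sum_{n > K} \myexpect{\Xi(n)} < \mu \sum_{n > K} \theta^{n-1}$ is a convergent geometric series. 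The finitely many initial terms $\sum_{n=1}^{K} \myexpect{\Xi(n)}$ are each finite (each $\Xi(n) = \alpha^n \rho^{\sum_{i=1}^n(\Delta_i - 1)} \le \alpha^n$ deterministically since $\rho \in (0,1)$ and $\Delta_i \ge 1$, so the expectation is at most $\alpha^n < \infty$), hence their sum is finite. Adding the two pieces gives $\sum_{n=1}^\infty \myexpect{\Xi(n)} < \infty$, and Lemma~\ref{lem:Xi} then yields stochastic stability of \eqref{eq:sys}.

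One subtlety I would flag — though it is a matter of bookkeeping rather than a genuine obstacle — is the ``steady state'' hypothesis inside Lemma~\ref{lem:Omega}. The theorem's Assumption~\ref{assump:initial} fixes a deterministic initial condition with empty buffers, not the stationary distribution of $\{Z\}_{\mathbb{N}_0}$. I would handle this by noting that, by Lemma~\ref{lem:Z}, $\{Z\}_{\mathbb{N}_0}$ is an IA Markov chain on a finite state space, so it converges to its unique stationary law geometrically fast regardless of the starting state; the bound of Lemma~\ref{lem:Omega} therefore continues to hold (with possibly adjusted $\mu$ and $K$) when the chain is started from the condition in Assumption~\ref{assump:initial}, because any discrepancy between the transient and stationary distributions of $(Z(k_n), Z(k_{n+1}))$ decays geometrically and can be absorbed into the constant $\mu$ and a larger $K$.

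In short, there is no hard step here: the real content of Theorem~\ref{theory:tight} lives in Lemma~\ref{lem:Omega} (proved in Appendix~A), where the spectral radius of $\mathbf{U}$ emerges from iterating the conditional-expectation recursion $\myexpect{\Xi(n+1) \mid \cdot}$ against the transition structure $\tilde{\mathbf{V}}$ of the embedded chain $\{Z(k_n)\}$. Given that lemma, the proof of the theorem is the routine ``dominate by a summable geometric series, then apply Lemma~\ref{lem:Xi}'' argument sketched above, and I would write it in just two or three sentences.
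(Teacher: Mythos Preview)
Your proposal is correct and follows exactly the same route as the paper, which states only that the result is ``straightforward'' from Lemmas~\ref{lem:Xi} and~\ref{lem:Omega}; your write-up simply makes that straightforward step explicit by choosing $\epsilon$ small enough that $\tfrac{\alpha}{\rho}(\lambda_{\max}(\mathbf{U})+\epsilon)<1$ and then summing the geometric tail. Your remark on the steady-state hypothesis in Lemma~\ref{lem:Omega} versus the fixed initial condition in Assumption~\ref{assump:initial} is a fair observation, but the paper does not address it and simply treats the combination of the two lemmas as immediate.
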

\begin{remark}
This stability condition has considered the effect of all the conditional expectations $\myexpect{\rho^{\Delta_{n+1}} \vert Z(k_n)=s_i,Z(k_{n+1})=s_j}$ and the state transition probability matrix of $\{Z(k_n)\}_{n\in\mathbb{N}}$, i.e., $\tilde{\mathbf{V}}$.
Using Perron-Frobenius theorem~\cite{Perron}, since $\mathbf{U}$ is non-negative, we have $\lambda_{\max}(\mathbf{U})\leq \max_{i\in\{1,\cdots,S_0\}}\sum_{j=1}^{S_0}[\mathbf{U}]_{i,j} \leq \max_{i,j\in\{1,\cdots,S_0\}}r_{i,j}$. Thus, the sufficient condition Theorem~\ref{theory:tight} is less restrictive  than that in Theorem~\ref{theory:loose}.
\end{remark}

\subsection{Numerical Example for Stability Condition Calculation}
{\color{black}
We set the buffer lengths $\Lambda_c = \Lambda_a =2$, the maximum number of calculated control commands per time step $\bar{N} = 2$, the maximum number of commands that can be transmitted via the C-A channel $\bar{B} = 2$, the number of channel states of the S-C channel $\bar{B}'=2$, and the packet drop probabilities in the two states  as $0.2$ and $0.01$.
We assume that $\rho = 0.8$.
The Markov state transition probability matrices of the C-A and S-C channel state $(B(t),B'(t)) \in \{0,1,2\}\times \{1,2\}$, and the processor's computational resource $N(t)\in \{0,1,2\}$ are 
\begin{equation}
\mathbf{M} = \begin{bmatrix}
0.24&	0.16&	0.06&	0.04&	0.30&	0.20\\
0.04&	0.36&	0.01&	0.09&	0.05&	0.45\\
0.12&	0.08&	0.06&	0.04&	0.42&	0.28\\
0.02&	0.18&	0.01&	0.09&	0.07&	0.63\\
0&         0&    0.30 &   0.20  &  0.30 &   0.20\\
0 &        0&    0.05 &   0.45  &  0.05  &  0.45
\end{bmatrix},
\end{equation}
and
\begin{equation}
\mathbf{N} = \begin{bmatrix}
0.1 & 0.2 & 0.7\\
0 & 0.6 & 0.4\\
0.1 & 0.3 & 0.6
\end{bmatrix}.
\end{equation}

Thus, the process $Z(t)$ defined in~\eqref{eq:Z} has $3 \times 3 \times 3\times 2\times3 =162$ states.
Using the state transition rules of $\lambda_c(t)$ and $\lambda_a(t)$ in \eqref{eq:prob_c} and \eqref{eq:prob_a}, and the channel and computational resource state transition probabilities~\eqref{eq:prob_B} and \eqref{eq:prob_N}, the state transition matrix of $Z(t)$ can be calculated.
Due to the space limitation, the matrix is presented in~\cite{mymatrix} with $72$ transient states (highlighted in yellow), and thus $Z(t)$ has $S= 162-72=90$ recurrent (effective) states including $S_0=54$ states with $\lambda_a(t) =0$ (highlighted in red).
By removing the transient states, the $90 \times 90$ state transition matrix $\mathbf{V}$ of the effective states can be easily obtained.
Due to the space limitation, it is not possible to show the calculation of the stability conditions based on the $90 \times 90$ matrix $\mathbf{V}$. So we only present the comparison result of the stability conditions in Theorems~\ref{theory:loose} and~\ref{theory:tight}: 
\begin{equation}
\max_{i,j\in\{1,\cdots,S_0\}} r_{i,j} = 0.8
\gg
\lambda_{\max}(\mathbf{U}) = 0.0016,
\end{equation}
showing that the sufficient stability condition in Theorem~\ref{theory:tight} is less restrictive than Theorem~\ref{theory:loose}.

For ease of illustration, we will show the calculation of the stability conditions based on a randomly generated (small) $\mathbf{V}$ with $S=4$ and assume that $S_0 =2$, where
\begin{equation}
\mathbf{V} =
\begin{bmatrix}
0.10&	0.10& \aug&	0.10&	0.70\\
0.30&	0.20& \aug&	0.10&	0.40\\
\hline
0.60&	0.20& \aug&	0.10&	0.10\\
0.90&	0.05& \aug&	0.02&	0.03
\end{bmatrix}
\in \mathbb{R}^{4\times4},
\end{equation}
and thus $\mathbf{V}_{0,0}$, $\mathbf{V}_{0,1}$, $\mathbf{V}_{1,0}$ and $\mathbf{V}_{1,1}$ are obtained directly from \eqref{eq:V_devide}.

Using Lemma~\ref{lem:transition}, the state transition matrix of $\{Z(k_n)\}_{n\in\mathbb{N}_0}$ and its stationary distribution are obtained as
\begin{equation}\label{eq:V_tilder}
\tilde{\mathbf{V}} = \begin{bmatrix}
0.8378&	0.1622\\
0.7546&	0.2454
\end{bmatrix},
\end{equation}
and 
\begin{equation}\label{eq:pi}
\bm{\pi} = [0.8231\ 0.1769]^\top.
\end{equation}

Taking \eqref{eq:V_tilder} and \eqref{eq:pi} into \eqref{eq:r_ij}, we have
$r_{i,j} = [\mathbf{R}]_{i,j}$, where
\begin{equation}\label{eq:R}
\mathbf{R} = \begin{bmatrix}
0.6511&	0.7323\\
0.6971&	0.7673
\end{bmatrix}.
\end{equation}

Taking \eqref{eq:V_tilder}, \eqref{eq:pi} and \eqref{eq:R} into the definition of $\mathbf{U}$ in Lemma~\ref{lem:Omega}, we have
\begin{equation}\label{eq:U}
\mathbf{U} =\begin{bmatrix}
0.3695&	0.0716&	0&	0\\
0&	0&	0.0165&	0.0054\\
0.4156&	0.0805&	0&	0\\
0& 0&	0.0181&	0.0059
\end{bmatrix}.
\end{equation}

By applying \eqref{eq:R} and \eqref{eq:U} onto Theorems~\ref{theory:loose} and~\ref{theory:tight}, respectively, we have
\begin{equation}
\max_{i,j\in\{1,\cdots,S_0\}} r_{i,j} = 0.7673
>
\lambda_{\max}(\mathbf{U}) = 0.3731.
\end{equation}}

\section{Robustness to Process Noise} \label{sec:robust}
In this section, we investigate the stability condition of the plant system below with process noise:
\begin{equation}\label{eq:sys_noise}
\mathbf{x}(t+1) = f(\mathbf{x}(t),\mathbf{u}(t),\mathbf{w}(t)),
\end{equation}
where $\mathbf{w}(t)\in \mathbb{R}^{l_s}$ is a white noise process, which is independent with the other random processes of the system.

For ease of analysis, we consider  uniform bounds and continuity as follows.
\begin{assumption}[\!\!\cite{anytime2}]\label{assum:robust}
	\normalfont
There exists $\beta_x$, $\beta_u$, $\beta_w$, $\beta_V$, $\beta_{\kappa}$, $\rho$, $\alpha$ and $\eta \in \mathbb{R}_{>0}$, such that, $\forall \mathbf{x}, \mathbf{z},\mathbf{w} \in \mathbb{R}^{l_s}$ and $\forall \mathbf{u},\mathbf{v}\in \mathbb{R}^{l_u}$ the following are satisfied
\begin{equation}
\begin{aligned}
& |f(\mathbf{x,u,w}) - f(\mathbf{z,v,0})| \leq \beta_x |\mathbf{x-z}| + \beta_u |\mathbf{u-v}| + \beta_w |\mathbf{w}|\\
& |V(\mathbf{x}) - V(\mathbf{z})| \leq \beta_V |\mathbf{x-z}|\\
& |\kappa(\mathbf{x}) - \kappa(\mathbf{z})| \leq \beta_{\kappa} |\mathbf{x-z}|\\
& V(f(\mathbf{x},\kappa(\mathbf{x}),\mathbf{w})) \leq \rho V(\mathbf{x}) + \eta \vert \mathbf{w} \vert\\
& V(f(\mathbf{x},\mathbf{0},\mathbf{w})) \leq \alpha V(\mathbf{x}) + \eta \vert \mathbf{w} \vert.
\end{aligned}
\end{equation}
\end{assumption}

When considering unbounded process noise, the stability condition in Definition~\ref{def:stability} cannot be satisfied. Thus, we consider the following  stability condition in terms of the average cost~\cite{anytime2}.

\begin{definition}\label{def:stability_rubust}
	\normalfont	
	The dynamical system with process noise \eqref{eq:sys_noise} is stochastically stable, if for some $\psi \in \mathcal{K}_\infty$, the average expected value $\limsup\limits_{T\rightarrow \infty}\frac{1}{T}\sum_{k=0}^{T} \myexpect{\psi(|\mathbf{x}(k)|)}<\infty$.
\end{definition}

\begin{theorem}\label{theory:robust}
\normalfont	
	Suppose that Assumptions~\ref{assum:rho}-\ref{assum:robust} holds. The system~\eqref{eq:sys_noise} is stochastically stable if 
	$\Omega < 1$.
\end{theorem}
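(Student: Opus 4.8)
The goal is to transfer the stochastic‑stability result of Theorem~\ref{theory:tight} (proven for the process‑noise‑free plant~\eqref{eq:sys}) to the noisy plant~\eqref{eq:sys_noise}, under the average‑cost notion of Definition~\ref{def:stability_rubust}. The natural route is to follow the argument of Lemma~\ref{lem:Xi}, but carried out in a \emph{perturbed} form using the last two inequalities of Assumption~\ref{assum:robust}, which are exactly the noisy analogues of Assumptions~\ref{assum:rho} and~\ref{assum:alpha} with additive slack $\eta|\mathbf{w}(t)|$. The plan is: (i) re‑derive the per‑cycle Lyapunov recursion in the presence of the $\eta|\mathbf{w}|$ terms; (ii) unroll it to bound $\myexpect{V(\mathbf{x}(k_n))}$ by $\myexpect{\Xi(n)}\myexpect{V(\mathbf{x}(0))}$ plus an accumulated noise contribution; (iii) sum over the cycle and over $n$, using $\Omega<1$ (via Lemma~\ref{lem:Omega}) to control the geometric part; (iv) pass from the ``open‑loop event'' time indices $k_n$ back to all $t$ exactly as in Lemma~\ref{lem:Xi}; and finally (v) divide by $T$ and take $\limsup$ to land on Definition~\ref{def:stability_rubust}, using $\psi_1(|\mathbf{x}|)\le V(\mathbf{x})$ to convert the $V$‑bound into a bound on $\myexpect{\psi_1(|\mathbf{x}(k)|)}$.

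\textbf{Key steps in order.} First, within a cycle $t\in\{k_n,\dots,k_{n+1}-1\}$, the controller applies $\mathbf{0}$ (actuator buffer empty) so by the last line of Assumption~\ref{assum:robust}, $V(\mathbf{x}(t+1))\le \alpha V(\mathbf{x}(t)) + \eta|\mathbf{w}(t)|$, and iterating from $k_n$ gives
\[
\sum_{t=k_n}^{k_{n+1}-1} V(\mathbf{x}(t)) \le \Big(1+\tfrac{\alpha}{1-\rho}\Big)\Big( V(\mathbf{x}(k_n)) + \eta \!\!\sum_{t=k_n}^{k_{n+1}-1}\! |\mathbf{w}(t)|\Big),
\]
mirroring the bound in the proof of Lemma~\ref{lem:Xi} but with the extra noise term. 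Second, across a cycle boundary the fourth inequality of Assumption~\ref{assum:robust} (with $\kappa$) together with the Lipschitz bounds on $f,V,\kappa$ yields a contraction $V(\mathbf{x}(k_{n+1})) \le \rho^{\Delta_{n+1}-1}\big(\cdots\big) V(\mathbf{x}(k_n)) + (\text{noise slack over the cycle})$; unrolling over $n$ reproduces $V(\mathbf{x}(k_n)) \le \Xi(n) V(\mathbf{x}(0)) + \sum_{m=1}^{n} \Xi(n)/\Xi(m)\cdot \eta\,(\text{cycle-}m\text{ noise})$. Third, take expectations: the noise is independent of $\{Z\}_{\mathbb{N}_0}$ and zero‑mean white with (assumed) finite variance, so each noise block contributes a bounded constant times $\myexpect{\Xi(n)/\Xi(m)}$; by Lemma~\ref{lem:Omega} this decays geometrically in $n-m$ whenever $\Omega=\tfrac{\alpha}{\rho}\lambda_{\max}(\mathbf{U})<1$, so $\sum_n \myexpect{V(\mathbf{x}(k_n))}$ grows at most linearly — more precisely $\frac1N\sum_{n\le N}\myexpect{V(\mathbf{x}(k_n))}$ stays bounded. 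Fourth, convert the sum over cycle indices $n$ to a sum over real time $t$ using $k_n\ge n$ and the per‑cycle bound from step one, then divide by $T$ and take $\limsup$; finally invoke $\psi_1(|\mathbf{x}|)\le V(\mathbf{x})$ from Assumption~\ref{assum:rho} to conclude $\limsup_{T}\frac1T\sum_{k=0}^{T}\myexpect{\psi_1(|\mathbf{x}(k)|)}<\infty$, which is Definition~\ref{def:stability_rubust} with $\psi=\psi_1$.

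\textbf{Main obstacle.} The delicate point is controlling the \emph{accumulated} noise term: it is a double sum $\sum_{n}\sum_{m\le n} \myexpect{\Xi(n)/\Xi(m)}\cdot c$, and one must argue that $\myexpect{\Xi(n)/\Xi(m)}$ behaves like $\Omega^{\,n-m}$ up to a multiplicative constant — i.e. that Lemma~\ref{lem:Omega}'s spectral bound applies not just to $\myexpect{\Xi(n)}$ but to these ratios, which requires the stationarity/Markov structure of $\{Z(k_n)\}$ and a uniform bound on the conditional $r_{i,j}$'s. This forces the extra hypothesis (implicit in citing \cite{anytime2}) that the noise has bounded variance and that one works with $\{Z\}_{\mathbb{N}_0}$ in steady state, and it is why the conclusion is only an \emph{average}-cost statement: the transient part and the linearly‑growing noise accumulation are exactly what the time‑average $\frac1T$ absorbs. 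A secondary nuisance is bookkeeping the Lipschitz constants $\beta_x,\beta_u,\beta_w,\beta_V,\beta_\kappa$ when propagating the prediction error through the within‑cycle open‑loop dynamics, but this is routine given Assumption~\ref{assum:robust} and parallels~\cite{anytime2}. I would relegate the full computation to an appendix and present the cycle‑cost skeleton in the main text.
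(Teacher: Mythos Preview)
Your overall route matches the paper's: the paper simply invokes~\cite[Theorem~7.1]{anytime2} to obtain a per-cycle bound of the form
\[
\myexpect{\!\!\sum_{t=k_{n-1}}^{k_n-1}\! V(\mathbf{x}(t))} \le C\,\myexpect{\Xi(n)}\,\myexpect{V(\mathbf{x}(0))} + C\Big(\sum_{i=0}^{n-1}\myexpect{\Xi(i)}\Big)\bar{w} + \bar{w}',
\]
with $\bar{w},\bar{w}'$ proportional to $\myexpect{|\mathbf{w}(t)|}$, and then observes that average-cost stability follows once $\sum_{i\ge 1}\myexpect{\Xi(i)}<\infty$, which Lemma~\ref{lem:Omega} gives directly under $\Omega<1$.

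Two corrections to your write-up. First, you have the within-cycle dynamics inverted: the actuator buffer is empty (so $\mathbf{u}=\mathbf{0}$ and the $\alpha$-bound applies) \emph{only} at the cycle endpoint $t=k_n$; for $t\in\{k_n+1,\dots,k_{n+1}-1\}$ the buffer is nonempty and the buffered commands---which equal $\kappa$ of the predicted state, deviating from $\kappa(\mathbf{x}(t))$ only through the accumulated noise---are applied, so the $\rho$-bound plus Lipschitz correction is what you use there. Your displayed inequalities have the right shape, but the surrounding prose is backwards and would mislead a reader. Second, the ``main obstacle'' you flag is milder than you fear: the paper does not need to control the ratios $\myexpect{\Xi(n)/\Xi(m)}$ separately. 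By the Markov property of $\{Z(k_n)\}$ and independence of $\{\mathbf{w}\}$ from the channel/computation processes, the noise accumulation collapses to the single sum $\sum_{i\le n}\myexpect{\Xi(i)}$, and Lemma~\ref{lem:Omega} makes this sum \emph{convergent} (not merely Ces\`aro-bounded) when $\Omega<1$; the time-average is then bounded because each cycle contributes a uniformly bounded expected cost.
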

\begin{proof}
See Appendix B.
\end{proof}
\begin{remark}
Theorem~\ref{theory:robust} shows that the stability condition for the process-noise-free case holds for the process-noise-present one as well under Assumption~\ref{assum:robust}, which is in line with~\cite{anytime2}.
Note that in the case of Markov jump linear systems, stability conditions for noise-free cases are equivalent to conditions for noisy-cases as well, see Theorem 3.33 of Chapter 3 of \cite{costa2006discrete}.
\end{remark}

\section{Simulation Results}
{\color{black}
We present simulation results of the proposed dual-buffer anytime control algorithm and the baseline single-buffer algorithm~\cite{anytime2}, where the S-C and C-A channels are modeled as multi-state Markov chains. 
The system parameters are the same as in the numerical example in~Section~\ref{sec:stability}.
We consider an open-loop unstable constrained plant model of~\eqref{eq:sys_noise} in the presence of noise~\cite{anytime3}:
\begin{equation}
\begin{bmatrix}
x_1(t)\\
x_2(t)
\end{bmatrix}
=
\begin{bmatrix}
x_2(t) + u_1(t)\\
-\text{sat}(x_1(t)+x_2(t)) + u_2(t)
\end{bmatrix}
+
\begin{bmatrix}
w_1(t)\\
w_2(t)
\end{bmatrix},
\end{equation}
where
\begin{equation}
\text{sat}(\mu) = 
\begin{cases}
-10, \text{ if } \mu<-10\\
\mu, \text{ if } \mu \in [-10,10]\\
10, \text{ if } \mu>10
\end{cases}
\end{equation}
and the process noise $w_1(t)$ and $w_2(t)$ are independent zero-mean i.i.d. Gaussian and $\myexpect{w_1(t)^2}=\myexpect{w_2(t)^2} = 0.1$.
We take the control policy as $\kappa(\mathbf{x}) = [-x_2,\ 0.505 \text{sat}(x_1+x_2)]^\top$.

In Fig.~\ref{fig:simu}, we show the simulation of $|\mathbf{x}(t)|$ with $800$ time steps for both the dual-buffer anytime method and the baseline method. It is clear that the dual-buffer method significantly outperforms the baseline especially when the initial process state $\mathbf{x}(0)$ is far from the origin.
Therefore, adding the command buffer at the actuator can effectively overcome the effect of  packet dropouts, leading to significant control performance improvement.

\begin{figure}[t]
	\centering\includegraphics[scale=0.6]{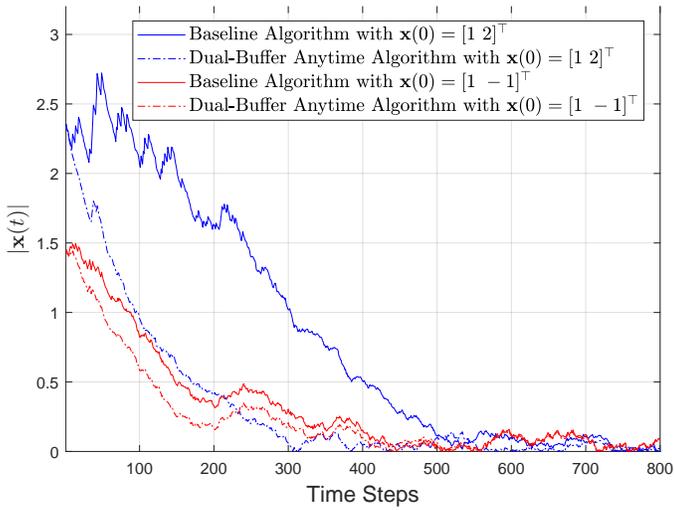}
	\vspace{-0.3cm}	
	\caption{Simulation of $|\mathbf{x}(t)|$ with different initiations.}
	\vspace{-0.5cm}	
	\label{fig:simu}
\end{figure}
}

\section{Conclusions}
We have studied an anytime control algorithm for  dual-channel-dual-buffer WNCS with random computational resources over correlated channels. 
%This class of systems has not been studied in previous literature. 
We have proposed a novel approach to derive sufficient conditions for stochastic  stability in the case of nonlinear plant models with disturbances. 
The stability conditions are stated in terms of plant dynamics, network dynamics,  buffer properties and computational resource dynamics.
Our numerical results have shown that the proposed dual-buffer anytime control system can provide significant control performance improvement when compared to the conventional single-buffer system as it effectively overcomes the effect of packet dropouts.
In future work, we will consider optimal control sequence design.
In addition, we may include an extension to large-scale WNCSs with multiple plants and controllers and that need to share network and communication resources.

% Generated by IEEEtran.bst, version: 1.14 (2015/08/26)

%
\appendix
\section*{A. Proof of Lemma~\ref{lem:Omega}}
Direct calculations give that
\begin{equation}\label{eq:main}
\begin{aligned}
&\myexpect{\Xi(n+1) \vert Z(k_n)=s_i ,Z(k_{n+1})=s_j}\\
&= \myexpect{\Xi(n) \frac{\alpha}{\rho} \rho^{\Delta_{n+1}} \vert Z(k_n)=s_i,Z(k_{n+1})=s_j}\\
&= \alpha' \myexpect{\Xi(n) \vert Z(k_n)=s_i}
\myexpect{\rho^{\Delta_{n+1}} \vert Z(k_n)=s_i,Z(k_{n+1})=s_j}\\
&= \alpha'  \myexpect{\Xi(n) \vert Z(k_n)=s_i} r_{i,j}\\
&= \alpha'  r_{i,j} \sum_{i'=1}^{S_0} \myexpect{\Xi(n) \vert Z(k_{n-1})=s_{i'}, Z(k_n)=s_i}\\ & \hspace{3.5 cm}\times \myprob{Z(k_{n-1})=s_{i'} \vert Z(k_n)=s_i} \\
&= \alpha'  r_{i,j} \sum_{i'=1}^{S_0} \myexpect{\Xi(n) \vert Z(k_{n-1})=s_{i'}, Z(k_n)=s_i} {f_{i',i}},
\end{aligned}
\end{equation}
where $\alpha' \triangleq \frac{\alpha}{\rho}$ and
\begin{equation} \label{eq:f}
f_{i',i} \!\triangleq\!\! \myprob{Z(k_{n-1})\!=\!s_{i'} \vert Z(k_n)\!=\!s_i} \!=\! \frac{\pi_{i'}}{\pi_{i}} \tilde{v}_{i',i},\! \forall  i',i \!=\! 1,\cdots,\! S_0.
\end{equation}
From \eqref{eq:f}, we define   $\mathbf{F}\in \mathbb{R}^{S_0 \times S_0}$ such that $[\mathbf{F}]_{i,j} = f_{i,j}$, i.e., 
\begin{equation}
\mathbf{F} \triangleq \begin{bmatrix}
\mathbf{f}_1 &\aug & \mathbf{f}_2&\aug & \cdots &\aug&\mathbf{f}_{S_0}
\end{bmatrix} = \mathbf{Z} \tilde{\mathbf{V}} \mathbf{Z}^{-1} ,
\end{equation}
where $\mathbf{Z} = \text{diag}\{\bm{\pi}\}$ and $\mathbf{f}_{i} \triangleq [f_{1,i},f_{2,i},\cdots,f_{S_0,i}]^\top \in \mathbb{R}^{S_0}$.

%$\mathbf{\pi} \triangleq [\pi_1,\pi_2,\cdots,\pi_{S_0}]$ is the stationary distribution, i.e., $\mathbf{\pi} \mathbf{G} = \mathbf{\pi} $.

Let's introduce   $\mathbf{\Xi}(n) \in \mathbb{R}^{S_0 \times S_0},\forall n\in \mathbb{N}$, where 
\begin{equation}
\begin{aligned}
&\left[\bm{\Xi}(n)\right]_{i,j} \triangleq \myexpect{\Xi(n) \vert Z(k_{n-1})=s_i ,Z(k_{n})=s_j}
\end{aligned}
\end{equation}
Let $\bm{\xi}_i(n) \triangleq \left[[\mathbf{\Xi}(n)]_{1,i},[\mathbf{\Xi}(n)]_{2,i},\cdots,[\mathbf{\Xi}(n)]_{S_0,i}\right]^\top$.
From \eqref{eq:main}, we have
\begin{equation}
\begin{aligned}
&\mathbf{\Xi}(n+1)= \\
&  \!\begin{bmatrix}
\!\alpha' {r_{1,1}} \mathbf{f}^\top_1 \bm{\xi}_1(n) & \alpha' {r_{1,2}} \mathbf{f}^\top_1 \bm{\xi}_1(n) & \cdots &\alpha' {r_{1,S_0}} \mathbf{f}^\top_1 \bm{\xi}_1(n)\\
\!\alpha' {r_{2,1}} \mathbf{f}^\top_2 \bm{\xi}_2(n) & \alpha' {r_{2,2}} \mathbf{f}^\top_2 \bm{\xi}_2(n) & \cdots &\alpha' {r_{2,S_0}} \mathbf{f}^\top_2 \bm{\xi}_2(n)\\
\!\vdots & \vdots & \ddots & \vdots \\
\!\alpha' {r_{S_0,1}} \mathbf{f}^\top_{S_0} \bm{\xi}_{S_0}(n) & \alpha' {r_{{S_0},2}} \mathbf{f}^\top_{S_0} \bm{\xi}_{S_0}(n) & \cdots &\alpha' {r_{{S_0},{S_0}}} \mathbf{f}^\top_{S_0} \bm{\xi}_{S_0}(n)
\end{bmatrix}\!\!.
\end{aligned}
\end{equation}
After some algebraic transformation, it can be obtained that
\begin{equation}\label{eq:v}
\mathbf{v}(n+1) = \mathbf{AU} \mathbf{v}(n),
\end{equation}
where 
\begin{equation}
\mathbf{v}(n) \triangleq [\bm{\xi}_1(n)^\top,\bm{\xi}_2(n)^\top,\cdots,\bm{\xi}_{S_0}(n)^\top]^\top \in \mathbb{R}^{S^2_0},\forall n\in \mathbb{N},
\end{equation}
\begin{equation}
\mathbf{A} \triangleq \text{diag}\{\alpha',\cdots,\alpha'\}\in \mathbb{R}^{S^2_0},
\end{equation}
\begin{equation}
\mathbf{U} \triangleq
\begin{bmatrix}
\mathbf{U}_1 \\ 
\mathbf{U}_2 \\ 
\vdots \\ 
\mathbf{U}_{S_0} \\
\end{bmatrix}\in \mathbb{R}^{S^2_0 \times S^2_0},
\end{equation}
and $\mathbf{U}_i, i =1,\cdots,S_0$, is defined under \eqref{eq:lem:key}.
%\begin{equation}
%\mathbf{U}_i \triangleq
%\begin{bmatrix}
%r_{1,i} \mathbf{f}^\top_1& \mathbf{0} &\cdots &\mathbf{0}\\
%\mathbf{0} & r_{2,i} \mathbf{f}^\top_2 &\cdots &\mathbf{0}\\
%\vdots & \cdots &\ddots & \vdots \\
%\mathbf{0} & \cdots &\mathbf{0} & r_{S_0,i} \mathbf{f}^\top_{S_0}
%\end{bmatrix}  \in \mathbb{R}^{S_0 \times S^2_0}.
%\end{equation}

From \eqref{eq:v}, it follows that
\begin{equation} \label{eq:v_iter}
\mathbf{v}(n) = \mathbf{A}^{n-1} \mathbf{U}^{n-1} \mathbf{v}(1).
\end{equation}
% it is clear that each element of $\mathbf{v}(n)$ decreases exponentially fast when \begin{equation}
%\Omega= \lambda_{\max}(\mathbf{AU}) = \alpha'\lambda_{\max}(\mathbf{U})<1.
%\end{equation}
From~\cite[Lemma~2]{liu2020remote}, for any arbitrarily small $\epsilon>0$, there exists $\mu'>0$ and $K>0$, such that
\begin{equation} \label{eq:Un}
\max_{i,j\in\{1,\cdots,S_0\}}  [\mathbf{U}^{n}]_{i,j} < \mu' \left(\lambda_{\max} (\mathbf{U})+\epsilon\right)^n, \forall 
n> K.
\end{equation}
Taking \eqref{eq:Un} into \eqref{eq:v_iter}, we have
\begin{equation}
\myexpect{\Xi(n)} \! \leq \!\!\!\!\! \max_{i\in\{1,\cdots,S^2_0\}} [\mathbf{v}(n)]_{i}\! < \!\mu \Big(\!\alpha'\left(\!\lambda_{\max}(\mathbf{U})+\epsilon\!\right)\!\!\Big)^{n-1}, \forall \epsilon\!>\!0,
n\!>\! K,
\end{equation}
where $\mu= \mu' S^2_0 \max_{i\in\{1,\cdots,S^2_0\}}[\mathbf{v}(1)]_{i}<\infty$.
This competes the proof.

\section*{B. Proof of Theorem~\ref{theory:robust}}
The proof follows the same steps of~\cite[Theorem 7.1]{anytime2} in analyzing the noise effect on the performance, since the analysis only depends on the process $\{Z(k_n)\}_{n\in {\mathbb{N}_0}}$. From \cite{anytime2}, it can be obtained that
\begin{equation} \label{eq:robust_v_sum}
\begin{aligned}
&\myexpect{\sum_{t=k_{n-1}}^{k_n-1}V(\mathbf{x}(t))} \leq \frac{1+\alpha-\rho}{1-\rho} \myexpect{\Xi(n)} \myexpect{V(\mathbf{x}(0))} \\
&+ \frac{1+\alpha-\rho}{1-\rho} \left(\myexpect{\Xi(n-1)} + \myexpect{\Xi(n-2)}+ \cdots + 1\right) \bar{w} +  \bar{w}',
\end{aligned}
\end{equation}
{\color{black}where $\Xi(n)$ is defined in \eqref{eq:Xi_def}, $\bar{w} = \Psi_1 \myexpect{|\mathbf{w}(t)|}$, $\bar{w}' = \Psi_2 \myexpect{|\mathbf{w}(t)|}$, and $\Psi_1$ and $\Psi_2$ are bounded constant determined by $\beta_x$, $\beta_u$, $\beta_w$, $\beta_V$, $\beta_{\kappa}$, $\rho$, and $\eta$ (see \cite[Appendix D]{anytime2} for the calculation of $\Psi_1$ and $\Psi_2$).}

Based on \eqref{eq:robust_v_sum}, Theorem~\ref{theory:robust} holds if we can prove 
\begin{equation}\label{eq:robust_expect}
\limsup\limits_{n\rightarrow \infty}\sum_{i=1}^{n} \myexpect{\Xi(i)}<\infty.
\end{equation}
Taking \eqref{eq:lem:key} into \eqref{eq:robust_expect}, for any arbitrarily small $\epsilon>0$, there exists $\mu',K'>0$ such that, $\forall n>K'$, the following are satisfied 
\begin{equation}
\begin{aligned}
\sum_{i=1}^{n} \myexpect{\Xi(i)} &\leq \mu' \left(\Big(\frac{\alpha}{\rho} \left(\lambda_{\max}(\mathbf{U})+\epsilon\right)\Big)^{n-2} \right.\\
&\left. \hspace{0.5cm}+\Big(\frac{\alpha}{\rho} \left(\lambda_{\max}(\mathbf{U})+\epsilon\right)\Big)^{n-3}+\cdots+1\right) +c,
\end{aligned}
\end{equation}
where $c$ is a constant.
Then, it is clear that \eqref{eq:robust_expect} holds if $\frac{\alpha}{\rho} \left(\lambda_{\max}(\mathbf{U})\right)<1$. 
%This completes the proof of Theorem~\ref{theory:robust}.

  \balance
    
	\ifCLASSOPTIONcaptionsoff
	\newpage
	\fi

\end{document}